
\documentclass[twoside,leqno,twocolumn]{article}
\usepackage{ltexpprt}
\usepackage{amsmath}

\providecommand{\qedsymbol}{$\square$}
\newcommand{\mathqed}{\quad\hbox{\qedsymbol}}
\DeclareRobustCommand{\qed}{%
  \ifmmode \mathqed
  \else
    \leavevmode\unskip\penalty9999 \hbox{}\nobreak\hfill
    \quad\hbox{\qedsymbol}%
  \fi
}
\renewenvironment{proof}{\begin{@proof}}{\qed \end{@proof}}

\usepackage{amssymb}
\usepackage{todonotes}

 \newcommand{\E}{\mathbb{E}}

\newcommand{\ed}{\operatorname{ed}}

\usepackage{tikz}
\usetikzlibrary{arrows,shapes}
\usetikzlibrary{matrix}
\usepackage{algorithm}

\newtheorem{thm}{Theorem}[section]
\newtheorem{defn}[thm]{Definition}
\newtheorem{lem}[thm]{Lemma}

\newtheorem{cor}[thm]{Corollary}

\newtheorem{remark}[thm]{Remark}

\usepackage{subcaption}

\usepackage{hyperref}

\begin{document}

\title{\Large Efficiently Approximating Edit Distance Between Pseudorandom Strings}
\author{William Kuszmaul\thanks{Massachusetts Institute of Technology, Cambridge MA, email: \texttt{kuszmaul@mit.edu}. Supported by an MIT Akamai Fellowship and a Fannie \& John Hertz Foundation Fellowship}}
\date{}

\maketitle







\begin{abstract} \small\baselineskip=9pt
  We present an algorithm for approximating the edit distance $\ed(x,
  y)$ between two strings $x$ and $y$ in time parameterized by the
  degree to which one of the strings $x$ satisfies a natural
  pseudorandomness property. The pseudorandomness model is asymmetric
  in that no requirements are placed on the second string $y$, which
  may be constructed by an adversary with full knowledge of $x$.

  We say that $x$ is \emph{$(p, B)$-pseudorandom} if all pairs $a$ and
  $b$ of disjoint $B$-letter substrings of $x$ satisfy
  $\ed(a, b) \ge pB$. Given parameters $p$ and $B$, our algorithm
  computes the edit distance between a $(p, B)$-pseudorandom string
  $x$ and an arbitrary string $y$ within a factor of $O(1/p)$ in time
  $\tilde{O}(nB)$, with high probability.  If $x$ is generated at
  random, then with high probability it will be $(\Omega(1), O(\log
  n))$-pseudorandom, allowing us to compute $\ed(x, y)$ within a
  constant factor in near linear time. For strings $x$ of varying
  degrees of pseudorandomness, our algorithm offers a continuum of
  runtimes.

  Our algorithm is robust in the sense that it can handle a small
  portion of $x$ being adversarial (i.e., not satisfying the
  pseudorandomness property). In this case, the algorithm incurs an
  additive approximation error proportional to the fraction of $x$
  which behaves maliciously.

  The asymmetry of our pseudorandomness model has particular appeal
  for the case where $x$ is a \emph{source string}, meaning that
  $\ed(x, y)$ will be computed for many strings $y$. Suppose that one
  wishes to achieve an $O(\alpha)$-approximation for each $\ed(x, y)$
  computation, and that $B$ is the smallest block-size for which the
  string $x$ is $(1/\alpha, B)$-pseudorandom. We show that without
  knowing $B$ beforehand, $x$ may be preprocessed in time
  $\tilde{O}(n^{1.5}\sqrt{B})$, so that all future computations of the
  form $\ed(x, y)$ may be $O(\alpha)$-approximated in time
  $\tilde{O}(nB)$. Furthermore, for the special case where only a
  single $\ed(x, y)$ computation will be performed, we show how to
  achieve an $O(\alpha)$-approximation in time
  $\tilde{O}(n^{4/3}B^{2/3})$.

\end{abstract}

\section{Introduction}

The edit distance $\ed(x, y)$ between two strings $x$ and $y$ over an
alphabet $\Sigma$ is the minimum number of insertions, deletions, and
substitutions of characters needed to transform $x$ to $y$. The
textbook dynamic-programming algorithm for edit distance runs in time
$O(n^2)$ \cite{WagnerF74, needleman1970general, vintzyuk1968}, and
conditional lower bounds suggest that no algorithm can do more than a
sub-polynomial factor better \cite{backurs2015edit, hard2} (unless the Strong
Exponential Time Hypothesis fails).

The difficulty of computing edit distance poses a significant
challenge for applications involving large strings. In computational
biology, for example, edit distance is an important tool for comparing
differences between genetic sequences \cite{navarro2001guided, blast,
  needleman1970general}. Rather than relying on the quadratic-time
algorithm, such applications have often resorted to the use of fast
heuristics \cite{gusfield1997algorithms, navarro2001guided,
  heuristicoriginal, blast, ma2002patternhunter}.

On the theoretical side, extensive work has been done towards
algorithms that circumvent the quadratic lower bound, either by
approximating edit distance, or by making assumptions on the input
\cite{ApproxPolyLog, ApproxSubPolyDistortion, approx18, Embedding,
  indyk2001algorithmic, landau1998incremental, smoothcomplexity,
  chakraborty2016streaming2, icalp, haeupler18}.

One of the most notable successes in this direction is the algorithm
of Landau and Myers \cite{landau1998incremental} with runtime
parameterized by the edit distance, computing $\ed(x, y)$ in time $O(n
+ \ed(x, y)^2)$. The algorithm runs in linear time when the edit
distance is small (say, less than $\sqrt{n}$), and offers a continuum
of runtimes for larger edit distances. Another celebrated result is
the approximation algorithm of Andoni et al.  \cite{ApproxPolyLog},
which approximates edit distance within a factor of $(\log
n)^{O(1/\epsilon)}$ in time $O(n^{1 + \epsilon})$. A recent
breakthrough by Chakraborty et al. \cite{approx18} gives the first
approximation algorithm to achieve a sub-logarithmic approximation
ratio in strongly subquadratic time. Their algorithm computes a
constant approximation for edit distance in time $O(n^{12/14})$. It
remains unknown whether a sub-logarithmic approximation for edit
distance can be computed in close to linear time.

It was recently shown by Andoni and Krauthgamer
\cite{smoothcomplexity} that better tradeoffs could be achieved if
certain randomness assumptions are placed on $x$ and $y$. In
particular, they introduce the \emph{smoothed complexity model}, in
which $x$ and $y$ are binary strings which are assumed to be partly
determined by a random process: Given two initial strings $x^\circ$
and $y^\circ$ with longest common subsequence $A$, $x$ and $y$ are
defined by perturbing each letter of $x^\circ$ and $y^\circ$ with
probability $p$, except that $A$ is perturbed in the same way inside
each of the two strings. Given two strings $x$ and $y$ which are
constructed in this manner, the algorithm of \cite{smoothcomplexity}
can be used to compute an $O\left(\frac{1}{\epsilon p} \log
\frac{1}{\epsilon p}\right)$-approximation for $\ed(x, y)$ in time
$O\left(n^{1 + \epsilon} / \sqrt{\ed(x, y)}\right)$. For constant $p$
and small $\epsilon < 1$, this gives a constant approximation in close
to linear (or potentially sublinear) time.

\paragraph*{Our Contribution} We consider an asymmetric randomness model for edit distance, in which
one of the strings $x$ is presumed to satisfy natural randomness
properties, but in which the string $y$ may be constructed by an
adversary (with full knowledge of $x$). We show for a random string $x
\in \Sigma^n$ and an arbitrary string $y \in \Sigma^n$, there is an
approximation algorithm that computes $\ed(x, y)$ up to a constant
factor in near linear time, with high probability.

Rather than requiring that the string $x$ is truly random, our
algorithm also works for any string $x$ satisfying the following
property: There is some constant $c > 1$ such that for any two
disjoint substrings $a$ and $b$ in $x$ of length $c \log n$, the edit
distance between $a$ and $b$ is at least $\log n$.

By relaxing the above property, one can define a natural notion of
pseudorandomness for a string $x$. Specifically, we say that $x$ is
\emph{$(p, B)$-pseudorandom} if all pairs $a$ and $b$ of disjoint
$B$-letter substrings of $x$ satisfies $\ed(a, b) \ge pB$. 

Given parameters $p$ and $B$, our main result is an algorithm that
computes the edit distance between a $(p, B)$-pseudorandom string $x$
and an arbitrary string $y$ within a factor of $O(1/p)$ in time
$\tilde{O}(nB)$, with high probability (Section \ref{secmain}).

An interesting feature of our algorithm is that it allows for
substantial speedups over the quadratic-time algorithm (as well as the
$O(n^{12/14})$-time approximation algorithm of \cite{approx18}) even
for strings $x$ exhibiting relatively limited pseudorandomness. For
example, if $x$ is $(\Omega(1), n^{\epsilon})$-pseudorandom, then we
can recover a constant approximation for $\ed(x, y)$ in time
$\tilde{O}(n^{1 + \epsilon})$.

The asymmetry of our pseudorandomness model seems particularly
appealing for applications in which one string $x$ is treated as a
\emph{source string}, and in which the edit distance $\ed(x, y)$ is
then computed for a large collection of strings $y$. In particular,
one can first determine values of $p$ and $B$ for which $x$ is $(p,
B)$-pseudorandom, and then use our algorithm with parameters $p$ and
$B$ to approximate $\ed(x, y)$ for any string $y$. In Section
\ref{secoblivious}, we present an algorithm for determining the values
of $B$ and $p$ with which to run our approximation algorithm on a
string $x$. In particular, suppose we wish to obtain an
$\alpha$-approximation for each computation $\ed(x, y)$, and that $B$
is the smallest (power-of-two) block size for which the string $x$ is
$(1/\alpha, B)$-pseudorandom. Then without knowing $B$ beforehand, we
can perform an $\tilde{O}(n^{1.5}(\sqrt{B} + \alpha))$-time
preprocessing step on $x$, so that any edit distance computation
$\ed(x, y)$ can later be $O(\alpha)$-approximated in time
$\tilde{O}(nB)$. We also consider the special case where we wish to
perform only a single computation $\ed(x, y)$, but we do not know
beforehand the minimum (power-of-two) block size $B$ for which $x$ is
$(1/\alpha, B)$-pseudorandom; for this case we present an algorithm
which runs in time $\tilde{O}(n^{4/3}B^{2/3})$.

Our approximation algorithm is robust to a small portion of $x$ being
determined by an adversary, rather than satisfying the
pseudorandomness property, and the algorithm does not require us to
know which portions of $x$ have been affected by the adversary. In
this case, the algorithm incurs an additive approximation error
proportional to the fraction of $x$ which has been constructed
adversarially.

\paragraph*{Relationship to the Smoothed Complexity Model}
Interestingly, the smoothed complexity model \cite{smoothcomplexity}
can be viewed as a special case of our pseudorandomness model. In
particular, the string $x$ in the smoothed complexity model will be
$(p, O(\frac{1}{p} \log n))$-pseudorandom with high probability (see
Lemma 2.2 of \cite{smoothcomplexity}). Our algorithm can therefore be
used in the smoothed complexity model to obtain an $\tilde{O}(n /
p)$-time algorithm with approximation ratio $O(1/p)$.

Both our algorithm and the algorithm of \cite{smoothcomplexity} use
substring matching between $x$ and $y$ as an important algorithmic
component, where two substrings are said to match if their edit
distance is relatively small. Substring matching also plays an
important role in a number of the practical heuristics for
computing edit distance, including the widely used PatternHunter tool
\cite{ma2002patternhunter}. The prominent role of the technique within
both our algorithm and the algorithm of \cite{smoothcomplexity} helps
provide theoretical motivation for why the technique has had such
empirical success in practice. The empirical success of the technique
may also be an indicator that real-world inputs satisfy some variant
of $(p, B)$-pseudorandomness. Evaluating to what degree this is the
case would be an important direction of future work.

Beyond the use of substring matching, the techniques used in
\cite{smoothcomplexity} largely differ from ours. In particular, it
seems difficult to extend the techniques used in
\cite{smoothcomplexity} to our pseudorandomness model without using
space (and time) exponential in $p \cdot B$.

\paragraph*{Related Work on Average-Case Analysis for Approximate String Matching}

In the Approximate String Matching Problem, we are given a string $x
\in \Sigma^{n}$, a pattern $p$ of length at most $n$, and a threshold
$k$. The goal is to output all positions $i$ in $x$ such that there
exists some $j > i$ for which $\ed(x_i \cdots x_j, p) < k$. It was
shown by Landau et al. \cite{landau1989fast} that the Approximate
String Matching Problem can be solved in time $O(nk)$ using space
$O(n)$.

Additional space improvements have been presented for the case where
$x$ is selected at random from $\Sigma^{n}$. Notably, an algorithm due
to Myers \cite{myers1986incremental} achieves space $O(k)$ for random
$x$. Improvements in runtime have also been presented. In particular,
an algorithm due to Chang and Lampe \cite{chang1992theoretical}
achieves runtime $O(nk / \sqrt{|\Sigma|})$.

For strings $x$ and $y$ of length $n$, selected from a constant-size
alphabet $\Sigma$, no known algorithm for Approximate String Matching
achieves a strongly subquadratic running time, even for $x$ selected at
random. Consequently, there is relatively little overlap between the
setting considered in this paper and past work on Approximate String
Matching.

\section{Technical Overview}

In this section, we present a brief technical overview of our
approximation algorithm and of our algorithms for parameter detection.

\paragraph*{Approximation Algorithm for $x$ a $(p, B)$-Pseudorandom String} In Section \ref{secmain}, we present our approximation algorithm for computing the edit distance between a $(p, B)$-pseudorandom string $x$ and an arbitrary string $y$. A rough summary of our approximation algorithm is as follows: We break the strings $x$ and $y$ into blocks of size $\Theta(B)$, and define a notion of a block $x^i$ from $x$ \emph{edit matching} with a block $y^j$ from $y$. In order to find a sequence of edits from $x$ to $y$, we perform the following process recursively. We select a random pivot block $x^i$ from the blocks in the middle half of $x$. If the pivot either edit matches with no blocks in $y$, or edit matches with more than one block in $y$, then we consider it to be invalid, and we sample another random pivot until we find a valid one, giving up after $100 \log n$ consecutive failures. If, on the other hand, the pivot edit matches with exactly one block $y^j$ in $y$, then we break the strings at $x^i$ and $y^j$, respectively, and recursively try to align the two strings in each of the two halves. The base case for our algorithm occurs when either the strings $x$ and $y$ differ in size from each other by enough to force their edit distance to be large, or when $|x| = 0$. Our algorithm outputs a non-crossing matching between the blocks of $x$ and the blocks of $y$, where two blocks are only permitted to be matched if they edit match. By considering only sequences of edits which roughly respect the matchings between blocks, we are then able to efficiently find a sequence of edits from $x$ to $y$.

In analyzing the algorithm, we consider an optimal alignment $T$ between $x$ and $y$, and based on $T$ we define two types of subproblems: live subproblems and dead subproblems. \emph{Dead subproblems} are those which are either between vastly differently sized subtrings $u$ and $v$, or which are between substrings $u$ and $v$ which are for the most part not aligned with one-another by $T$. These subproblems prove simple to analyze because any mistakes the algorithm makes on a dead subproblem can be easily accounted for by the inherent misalignment in the subproblem.

The remaining subproblems are known as \emph{live subproblems}, and require a more sophisticated analysis. For each block $x^i$ in $x$, we denote by $\phi(i)$ the index of the block in $y$ closest to where the optimal alignment $T$ maps $x^i$. (i.e., the alignment $T$ maps the block $x^i$ to roughly where $y^j$ sits in $y$.) One of the key insights is that each live subproblem $A$  must satisfy the following property:
\begin{itemize}
\item \textbf{The Pivot Pairing Property: }If the algorithm selects a block $x^i$ as block pivot, then the block $y^{\phi(i)}$ must appear in the substring of $y$ considered by the subproblem.
\end{itemize}
If a pivot block $x^i$ edit matches with its counterpart $y^{\phi(i)}$, then any live subproblem which selects $x^i$ will be guaranteed to perform well. Using the $(p, B)$-pseudorandomness property of $x$, we are able to show that at most $O(\frac{1}{p}\ed(x, y))$ of the letters in $x$ reside in blocks $x^i$ that do not edit match with their counterpart $y^{\phi(i)}$. The blocks containing these letters, which we will call \emph{dirty blocks}, are the blocks which place our algorithm at risk of exhibiting a large approximation ratio. 

The Pivot Pairing Property has the following important consequence for dirty blocks: For each dirty block $x^i$, any live subproblem $A$ which uses $x^i$ as a pivot block, must pair $x^i$ with one of at most two blocks $y^{j_1}$ and $y^{j_2}$, where $j_1$ is the largest value $j_1 < \phi(i)$ for which $x^i$ edit matches with $y^j$, and $j_2$ is the smallest value $j_2 > \phi(i)$ for which $x^i$ edit matches with $y^{j_2}$. Since each dirty block has only two potential partners for it to pair with inside any live subproblem, there are, in total, only $O(\frac{1}{pB}\ed(x, y))$ pairs $(x^i, y^j)$ for which the algorithm is at risk of at some point selecting $x^i$ as a pivot and erroneously matching it with $y^j$. Call such a pair a \emph{dirty pair}.

In order to analyze the distortion of our algorithm, we consider each dirty pair $(x^i, y^j)$, and bound the expected damage to the algorithm's output incurred by the pair. For a given live subproblem $A$ that is at risk of pairing $x^i$ as a pivot with $y^j$, let $r$ denote the number of blocks in the substring of $x$ considered by $A$. We show that because $A$ is a live subproblem, it must contain $\Omega(r)$ valid pivot block options, meaning that $x^i$ has a probability of at most $O(1/r)$ of being selected. If the block $x^i$ is selected as a pivot and is matched with the block $y^j$, then we are able to bound the damage incurred to the algorithm (in terms of number of edits added to the final alignment) by $O(B \cdot |\phi(i) - j|)$, the number of letters which lie between where the optimal alignment $T$ maps $x^i$ and where our algorithm places $x^i$ within $y$. It follows that the expected damage incurred by the pair $(x^i, y^j)$ to our subproblem $A$ is at most
$$O\left(\frac{1}{r} \cdot B \cdot |\phi(i) - j|\right).$$

Using the Pivot Pairing Property, we can show that the size $r$ must be at least $\sigma = \Omega(|\phi(i) - j|)$, since the subproblem must involve both $y^j$ and $y^{\phi(i)}$. Moreover, for each range of the form $R_l = [2^l \sigma, 2^{l + 1} \sigma]$, there can be at most a constant number of subproblems $A$ that contain $x^i$ and have a size $r$ lying in the range. For each range $R_l$, it follows that the expected damage incurred by the pair $(x^i, y^j)$ to live subproblems for which $r \in R_l$ is at most
$$O\left(\frac{1}{2^l \sigma} \cdot B \cdot |\phi(i) - j|\right) = O\left(\frac{1}{2^l}B\right).$$
Summing over $l$, we find that the total expected damage incurred by $(x^i, y^j)$ is at most
$$O\left( \sum_{l \ge 0} \frac{1}{2^l} B\right) = O(B).$$
Summing over the $O(\frac{1}{pB} \ed(x, y))$ dirty pairs $(x^i, y^j)$, the total expected error in our algorithm's estimate for edit distance can be bounded by
$$O\left(\frac{1}{pB} \ed(x, y) \cdot B\right) = O\left(\frac{1}{p}\ed(x, y)\right),$$
as desired.

We break the full presentation and analysis of the algorithm into two parts:
\begin{itemize}
\item The first part is an approximation algorithm for a combinatorial matching problem in which the blocks in the strings $x$ and $y$ have been replaced with abstract objects (essentially characters), and in which a \emph{comparison function} $f(i, j)$ determines whether the blocks $x^i$ and $y^j$ may be matched with one another. The algorithm and analysis for this problem essentially follow the outline above.
\item The second part is a reduction from approximating $\ed(x, y)$ between two strings (one of which is $(p, B)$-pseudorandom) to the combinatorial matching problem. The reduction breaks the string $x$ into parts $x^1, x^2, \ldots$ of size $6B$ and breaks the string $y$ into parts $y^1, y^2, \ldots$ of size $3B$. Roughly speaking, we say that $x^i$ \emph{edit matches} with $y^j$ if there is a $6B$-letter substring $a$ containing $y^j$ for which $\ed(x^i, a) \le O(\frac{1}{p} B)$, and if the same is not true for $y^{j - 1}$.

The reduction encounters two technical subtleties. The first is that, although the strings $x$ and $y$ are the same length initially, they are mapped by the reduction to strings of length $\frac{n}{6B}$ and $\frac{n}{3B}$, respectively. As a result, every alignment in the resulting matching problem must leave at least half of the blocks in $y$ unmatched. This is resolved by defining an asymmetric cost model for alignments in the problem which we reduce to; in particular, a block $y^i$ only pays for being unmatched if the three blocks to each of its immediate left and right are also unmatched. The second subtlety is the handling of the case where some fraction of the blocks in $x$ have been modified by an adversary, rather than satisfying the $(p, B)$-pseudorandomness property within $x$. The reduction is designed to hide this aspect of the problem from the combinatorial matching problem which we reduce to, by ensuring that the optimal cost for the matching problem is given by $O\left(\frac{1}{p} \ed(x, y) + M(x)\right)$, where $M(x)$ is the number of letters in $x$ which lie in adversarially designed blocks. 
\end{itemize}

\paragraph*{Parameter Detection} In Section \ref{secoblivious}, we turn our attention to parameter detection. In particular, consider a source string $x$ and suppose we wish to compute an $O(\alpha)$-approximation for $\ed(x, y)$ for a large number of strings $y$. Define $B$ to be the smallest power-of-two block-size for which $x$ is $(1/\alpha, B)$-pseudorandom. Without knowing $B$ beforehand, we show how to preprocess $x$ in time $\tilde{O}(n^{1.5}(\sqrt{B} + \alpha))$ in order so that all future computations of $\ed(x, y)$ can be $O(\alpha)$-approximated in time $\tilde{O}(nB)$. One way to do to this would be to find the first $B_i = 2^i \alpha$ for which $x$ is $(1/\alpha, B_i)$-pseudorandom, which would take roughly quadratic time per $B_i$. The key observation to improve upon this is that, since we are willing to spend time $\tilde{O}(nB)$ to approximate each value $\ed(x, y)$, if it turns out that $\ed(x, y) \le \sqrt{nB}$, then we may use the $O(n + \ed(x, y)^2)$-time algorithm of \cite{landau1998incremental} in order to exactly compute $\ed(x, y)$ in the desired time bound. By exploiting this trick, we can tolerate our approximation algorithm for $\ed(x, y)$ having additive error of $\sqrt{nB}$. Consequently, rather than finding the first $B_i$ for which $x$ is $(1/\alpha, B_i)$-pseudorandom, we can instead check for the weaker condition that $x$ is $(1/\alpha, B_i)$-pseudorandom with the exception of blocks containing at most $O(\sqrt{nB})$ letters in $x$. Using random sampling, this can be done in time $\tilde{O}(n^{1.5}(\sqrt{B} + \alpha))$, as desired. In a similar manner, for the case where only a single computation $\ed(x, y)$ is to be performed, we obtain an $O(\alpha)$-approximation in time $\tilde{O}(n^{4/3}B^{2/3})$.

\section{Edit Distance Between Pseudorandom Strings}\label{secmain}

Let $0 < p < 1$ and $B \in \mathbb{N}$ be parameters. Throughout the
section, we will use $\Sigma$ to denote the alphabet over which
strings are taken. We will present a randomized algorithm that
computes the edit distance between a $(p, B)$-pseudorandom string $x
\in \Sigma^n$ and an arbitrary string $y \in \Sigma^n$ within a factor
of $O(1/p)$ in time $\tilde{O}(n B)$.

Additionally, our algorithm will be able to handle a small portion of
$x$ behaving maliciously. In the following definition, we define a
quantity $M(x)$ which captures the degree to which $x$ is not $(p,
B)$-pseudorandom.

\begin{defn}
Consider $x \in \Sigma^n$, and break it into blocks $x^1, x^2, \ldots,
x^{\lceil n / 6B \rceil}$ of size $6B$. If $n$ is not a multiple of
$6B$, then pad the final block with null characters so that it is
length $6B$. We say a block $x^i$ in $x$ is \emph{$p$-unique} if each
$B$-letter substring of $x^i$ has edit distance distance at least $pB$
from each $B$-letter substring in $x$ that does not intersect
$x^i$. Define the quantity $M(x)$ (which will often be abbreviated by
$M$) to be the number of letters in $x$ (including null characters
padded on the end) that are not in $p$-unique blocks.
\end{defn}

The goal of the section is to prove the following theorem:

\begin{theorem}
Let $n \in \mathbb{N}$ be a string-length parameter, $p \in (0, 1)$ be
a parameter such that $\frac{1}{p} \in \mathbb{N}$, and $B \in
\mathbb{N}$ be a block-size parameter.

Consider as inputs a string $x \in \Sigma^n$ and an arbitrary string
$y \in \Sigma^n$. Then there exists a randomized algorithm which in
time $\tilde{O}(n \cdot B)$ finds a sequence of $t$ edits between $x$
and $y$, where $t$ is no greater than
\begin{equation}
  O\left(\frac{1}{p}\ed(x, y) + M(x)\right),
  \label{eqresult}
\end{equation}
in expectation.
\label{thmmain}
\end{theorem}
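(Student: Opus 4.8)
The plan is to follow the two-part structure that the technical overview already advertises: first isolate an abstract combinatorial matching problem and solve it approximately, then reduce edit-distance-against-a-pseudorandom-string to that problem.

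\textbf{Step 1 (the combinatorial matching problem).} I would define an abstract problem in which $x$ and $y$ have each been replaced by a sequence of opaque tokens $x^1,\dots,x^{n_x}$ and $y^1,\dots,y^{n_y}$, together with a comparison predicate $f(i,j)\in\{0,1\}$ telling us whether $x^i$ may be matched to $y^j$, and an asymmetric cost model for unmatched blocks (a $y$-block only pays if a whole window around it is unmatched, which is what compensates for the size mismatch $n_y\approx 2n_x$). The algorithm is the recursive random-pivot procedure from the overview: pick a uniformly random pivot among the middle half of the current $x$-range, declare it invalid if it $f$-matches zero or $\ge 2$ of the $y$-blocks in range, resample up to $100\log n$ times, and on a valid pivot split both ranges and recurse; base cases are empty $x$-range or a forced large size gap. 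To analyze it I would fix an optimal alignment $T$ of the original strings, push it through the reduction to get a map $\phi$ on block indices, and classify subproblems as \emph{dead} (badly size-mismatched or mostly unaligned by $T$, whose errors are charged to the inherent misalignment) or \emph{live}. For live subproblems I establish the \textbf{Pivot Pairing Property}: whenever $x^i$ is chosen as pivot in a live subproblem, $y^{\phi(i)}$ lies in that subproblem's $y$-range; hence a clean pivot (one that $f$-matches its own $\phi(i)$) causes no damage, and the only dangerous events are \emph{dirty pairs} $(x^i,y^j)$ where $x^i$ is dirty and $j\in\{j_1,j_2\}$ is one of the two $f$-matches of $x^i$ flanking $\phi(i)$. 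The damage bound is exactly the dyadic summation sketched in the overview: a live subproblem of $x$-size $r$ picks $x^i$ with probability $O(1/r)$ (using that a live subproblem has $\Omega(r)$ valid pivots), pairing it with $y^j$ costs $O(B\,|\phi(i)-j|)$, and since $r\ge\sigma=\Omega(|\phi(i)-j|)$ and only $O(1)$ subproblems containing $x^i$ have size in each dyadic band $[2^\ell\sigma,2^{\ell+1}\sigma]$, the per-pair expected damage telescopes to $O(B)$. Finally I would show the matching the algorithm outputs can be turned into an actual edit sequence within the claimed cost by aligning each matched block pair with the quadratic-time algorithm and bounding the total work by $\tilde O(nB)$; the time bound comes from the recursion having depth $O(\log n)$, each pivot test costing $\tilde O(B)$ comparisons via a fast block-vs-window edit computation, and $O(\log n)$ resamples.

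\textbf{Step 2 (the reduction).} I would break $x$ into blocks of length $6B$ and $y$ into blocks of length $3B$, and define $f(i,j)=1$ iff some $6B$-letter window of $y$ containing $y^j$ has edit distance $\le O(B/p)$ from $x^i$ while no such window containing $y^{j-1}$ does (the ``and not the previous one'' clause is what makes the $f$-matches of any fixed $x^i$ sparse, so a dirty block really has only two relevant partners). I would then verify three things: (a) the optimal cost of the resulting matching instance is $O\!\big(\tfrac1p\ed(x,y)+M(x)\big)$ — here $p$-unique blocks behave as required and the letters inside non-$p$-unique blocks are exactly paid for by the $M(x)$ term, while the asymmetric $y$-cost absorbs the $2\times$ surplus of $y$-blocks; (b) any matching of cost $C$ in the abstract instance yields an edit sequence of cost $O(C)$ between $x$ and $y$, by concatenating per-block-pair alignments and charging unmatched stretches; (c) each comparison $f(i,j)$ can be evaluated in $\tilde O(B)$ time and there are $\tilde O(n/B)$ of them, plus the $f$-values needed during recursion can be computed on demand, keeping the total within $\tilde O(nB)$. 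Combining (a)–(c) with Step 1's approximation guarantee gives an edit sequence of expected length $O\!\big(\tfrac1p\ed(x,y)+M(x)\big)$ in time $\tilde O(nB)$, which is Theorem~\ref{thmmain}.

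\textbf{Main obstacle.} I expect the crux to be the careful accounting in the live-subproblem analysis — specifically, proving the Pivot Pairing Property, proving that a live subproblem contains $\Omega(r)$ valid pivot candidates (so that the $O(1/r)$ selection probability is legitimate), and proving the $O(1)$-subproblems-per-dyadic-band statement. These are the places where a naive argument loses more than a constant factor, and getting the asymmetric unmatched-cost model in the reduction to cooperate with the ``dead vs.\ live'' dichotomy (so that dead-subproblem errors are genuinely free) is the other delicate point; everything else is bookkeeping on top of these.
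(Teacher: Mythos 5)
Your proposal is correct and follows essentially the same route as the paper: the same reduction to an abstract matching problem with the asymmetric unmatched-cost model on $6B$/$3B$ blocks, the same random-pivot recursion, and the same live/dead dichotomy with the Pivot Pairing Property, $\Omega(r)$ valid pivots, and the geometric-band charging argument for dirty pairs. The only slip is a minor one in the time accounting — each evaluation of $f(i,j)$ costs $O(B^2)$ (via $O(1/p)$ low-distance edit computations at threshold $O(pB)$), not $\tilde{O}(B)$, but with $\tilde{O}(n/B)$ evaluations the total is still the claimed $\tilde{O}(nB)$.
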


Notice that by repeatedly applying Theorem \ref{thmmain} $O(\log n)$
times, and taking the smallest returned sequence of edits, one can
further ensure that Equation \eqref{eqresult} will hold with high
probability.

An important corollary of Theorem \ref{thmmain} is for the case where
$x$ is fully random.
\begin{cor}
  There is an algorithm which, given a random string $x \in \Sigma^n$,
  and an arbitrary string $y \in \Sigma^n$, finds a sequence of $O(\ed(x, y))$ edits between $x$ and $y$ in time $\tilde{O}(n)$, with high probability.
\end{cor}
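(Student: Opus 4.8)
The plan is to derive the corollary directly from Theorem~\ref{thmmain} by instantiating it with a constant $p$ and with block size $B = \Theta(\log n)$, and by showing that a uniformly random string $x \in \Sigma^n$ is $(p,B)$-pseudorandom with high probability — equivalently, that $M(x) = 0$ with high probability. Granting this, Theorem~\ref{thmmain} produces a valid sequence of $t$ edits from $x$ to $y$ in time $\tilde{O}(nB) = \tilde{O}(n)$ with $\E[t] = O(\tfrac{1}{p}\ed(x,y) + M(x)) = O(\ed(x,y))$, and the $O(\log n)$-repetition remark stated just after the theorem upgrades this expectation bound to a high-probability bound at the cost of only a logarithmic factor in the running time. (If $1/p$ is not an integer we simply replace $p$ by $1/\lceil 1/p\rceil$, which is still a constant and only weakens the pseudorandomness requirement.)

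So the crux is the claim: for a suitable absolute constant $p \in (0,1)$ and $B = C\log n$ with $C$ a sufficiently large constant, a uniform $x \in \Sigma^n$ is $(p,B)$-pseudorandom with probability at least $1 - n^{-10}$. I would fix two disjoint length-$B$ windows $I, J$ in $[n]$; since they are disjoint, the substrings $a = x|_I$ and $b = x|_J$ are independent and uniform over $\Sigma^B$. The first step is the inequality $\ed(a,b) \ge B - \mathrm{LCS}(a,b)$ (the characters left untouched by an optimal edit script of $a$ form a common subsequence of $a$ and $b$), which reduces bounding $\Pr[\ed(a,b) < pB]$ to showing that $\Pr[\mathrm{LCS}(a,b) > (1-p)B]$ is exponentially small in $B$. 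For that, I would bound the expected number of length-$\ell$ common subsequences of $a$ and $b$ by $\binom{B}{\ell}^2 |\Sigma|^{-\ell}$ and apply Markov's inequality: with $\ell = (1-p)B$ this is at most $\binom{B}{pB}^2 2^{-(1-p)B} \le 2^{(2H(p) - 1 + p)B}$, where $H$ is the binary entropy function, and this is $2^{-\Omega(B)}$ once $p$ is a small enough absolute constant with $2H(p) + p < 1$ (e.g.\ $p \le 1/20$). Hence $\Pr[\ed(a,b) < pB] \le 2^{-\Omega(B)} = n^{-\Omega(C)}$, and a union bound over the at most $n^2$ pairs of windows gives failure probability $n^{2-\Omega(C)} \le n^{-10}$ for $C$ large. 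Finally, "every pair of disjoint $B$-substrings of $x$ has edit distance $\ge pB$'' implies that every block $x^i$ is $p$-unique (a $B$-substring inside $x^i$ and a $B$-substring disjoint from $x^i$ are disjoint substrings of $x$), so $M(x) = 0$ with high probability.

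Combining the two pieces finishes the corollary: with probability at least $1 - n^{-10}$ the string $x$ satisfies $M(x) = 0$, and conditioned on this each invocation of the algorithm of Theorem~\ref{thmmain} returns a valid edit sequence of expected length $O(\ed(x,y))$ in time $\tilde{O}(n)$; running $\Theta(\log n)$ independent invocations and keeping the shortest output then yields a sequence of $O(\ed(x,y))$ edits with high probability, in total time $\tilde{O}(n)$. The one step that requires genuine care is the longest-common-subsequence tail bound: the naive estimate that merely counts the strings within edit distance $pB$ of a fixed length-$B$ string (roughly $(|\Sigma|B)^{pB}$ of them, against $|\Sigma|^B$ in total) is too weak for small alphabets and would only permit $p = o(1/\log\log n)$, so it is essential to argue via concentration of $\mathrm{LCS}$ (here via a first-moment computation) rather than a crude packing bound.
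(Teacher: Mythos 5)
Your proof is correct and follows the same route as the paper: instantiate Theorem~\ref{thmmain} with a constant $p$ and $B = \Theta(\log n)$, using that a uniformly random string is $(\Omega(1), O(\log n))$-pseudorandom with high probability, then apply the repetition remark to upgrade the expectation bound. The only difference is that the paper simply cites Lemma~2.2 of \cite{smoothcomplexity} for the pseudorandomness of random strings, whereas you prove it from scratch via the inequality $\ed(a,b) \ge B - \mathrm{LCS}(a,b)$ and a first-moment bound on the number of long common subsequences; that derivation, including the entropy estimate and the union bound over window pairs, is sound.
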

\begin{proof}
  This follows from Theorem \ref{thmmain}, and the fact that $x$ will be $(\Omega(1), O(\log n))$-pseudorandom with high probability (See 2.2 of \cite{smoothcomplexity}).
\end{proof}

For the rest of the section, we will assume without loss of generality
that $n$ is a multiple of $6B$. Notice that by padding $x$ with null
characters, one can ensure that the last block is of length $6B$
without increasing $M(x)$, and furthermore, if one places the same
padding on $y$, then the edit distance between the padded strings will
be the same as between the original strings.

For the rest of the section, let $n, p, B, x, y$ be as defined in
Theorem \ref{thmmain}, and assume that $6B \mid n$. In order to prove
Theorem \ref{thmmain}, we will reduce it to a combinatorial problem
which we call the Clean Alignment Problem. Formally, Theorem
\ref{thmmain} can be viewed as presenting a $\tilde{O}(nB)$-time
solution to the Pseudorandom Edit Distance Problem, defined as
follows.

\begin{itemize}
  \item \textbf{The Pseudorandom Edit Distance Problem (PEDP): }Given
    $x \in \Sigma^n$ and an arbitrary $y \in \Sigma^n$, recover a
    sequence of $t$ edits from $x$ to $y$ for $t$ satisfying $\E[t]
    \le O(1/p \cdot \ed(x, y) + M(x))$.
\end{itemize}

We will reduce PEDP to the Clean Alignment Problem, defined
below. Specifically, any algorithm for the Clean Alignment Problem
which runs in time $T(n)$ will imply an algorithm for PEDP which runs in
time $\tilde{O}(T(n) \cdot B)$.
 
 \begin{itemize}
\item \textbf{The Clean Alignment Problem: }Consider strings
  $x$ and $y$ of possibly different lengths. Let $f: [|x|] \times [|y|]
  \rightarrow \{0, 1\}$ be a \emph{comparison function} for $x$ and
  $y$. If $f(i, j) = 1$, we say that $x_i$ \emph{edit matches} with
  $y_j$.

  An alignment $\mathcal{A}$ (i.e., a non-crossing matching) between
  the letters of $x$ and the letters of $y$ is said to be an \emph{edit-matching alignment} 
  if every edge $(x_i, y_j) \in \mathcal{A}$ satisfies $f(i, j) = 1$. The cost of
  an alignment is defined asymmetrically: The \emph{$x$-portion cost}
  of $\mathcal{A}$ is the number of letters in $x$ to be unmatched
  by $\mathcal{A}$; the $y$-portion cost is the number of $i$ for
  which $y_i, y_{i + 1}, \ldots, y_{i + 6}$ are all unmatched by
  $\mathcal{A}$. The \emph{cost} of $\mathcal{A}$ is the sum of the $x$-portion and $y$-portion costs.

  An edit-matching alignment is said to be \emph{clean} if for all
  $(x_i, y_j) \in \mathcal{A}$, $x_i$ is the only letter in $x$ which edit
  matches with $y_j$, and $y_j$ is the only letter in $y$ which edit
  matches with $x_i$.

  Given $x$, $y$, and $f$, the goal of the \emph{Clean
    Alignment Problem} is to recover a (not necessarily clean)
  edit-matching alignment between $x$ and $y$ whose expected cost is
  within a constant factor of the optimal cost of a clean
  edit-matching alignment.
\end{itemize}

We remark that the difficulty of the Clean Alignment Problem comes
from the fact that the function $f$ in the Clean Alignment Problem
needs not satisfy any natural properties (such as e.g., some form of
transitivity). Thus the only way to determine whether $f(i, j) = 1$
for two letters $x_i$ and $y_j$, is to explicitly evaluate $f(i,
j)$. In order for an algorithm to run in time $\tilde{O}(n)$, as
desired, it must very selectively determine for which values of $i$
and $j$ to query $f(i, j)$.

The reader may notice that the Clean Alignment Problem is completely
determined by $f$, $|x|$, and $|y|$. That is, the actual contents of
$x$ and $y$ are irrelevant. Nonetheless, for ease of presentation, we
include $x$ and $y$ as inputs. For examples of a comparison function
$f$, and of clean edit-matching alignments, see Figure
\ref{figcleanalignment}.

\begin{figure*}[h]
\begin{subfigure}[b]{1 \textwidth}
\begin{center} \begin{tikzpicture}
\matrix[name=m,nodes={anchor=base west},matrix of nodes,row sep={2em,between origins},column sep=\tabcolsep,ampersand replacement=\&]{
$y$: \&  $y_1$ \& $y_2$ \& $y_3$ \& $y_4$ \& $y_5$ \& $y_6$ \& $y_7$ \& $y_8$ \& $y_9$ \& $y_{10}$ \& $y_{11}$ \& $y_{12}$ \& $y_{13}$ \& $y_{14}$ \& $y_{15}$ \\
\&    \&   \&   \&   \&   \&   \\ 
$x$:  \&  $x_1$ \& $x_2$ \& $x_3$ \& $x_4$ \& $x_5$ \& $x_6$ \& $x_7$ \& $x_8$ \& $x_9$ \& $x_{10}$ \& $x_{11}$ \& $x_{12}$ \& $x_{13}$ \& $x_{14}$ \& $x_{15}$ \\
};
\draw[-] (m-3-3) -- (m-1-5);
\draw[color = blue] (m-3-4) -- (m-1-10);
\draw[-] (m-3-7) -- (m-1-5);
\draw[color = blue] (m-3-9) -- (m-1-8);
\draw[-] (m-3-10) -- (m-1-7);
\draw[-] (m-3-10) -- (m-1-11);
\draw[color = blue] (m-3-13) -- (m-1-13);
\end{tikzpicture} \end{center}
\caption{An edge is drawn between $x_i$ and $y_j$ if $f(i, j) = 1$. Edges which are eligible to be used in a clean edit-matching alignment are colored blue. This occurs only when the edge is not incident to any other edges.}
\end{subfigure}

\begin{subfigure}[b]{1 \textwidth}
\begin{center} \begin{tikzpicture}
\matrix[name=m,nodes={anchor=base west},matrix of nodes,row sep={2em,between origins},column sep=\tabcolsep,ampersand replacement=\&]{
$y$: \&  $y_1$ \& $y_2$ \& $y_3$ \& $y_4$ \& $y_5$ \& $y_6$ \& $y_7$ \& $y_8$ \& $y_9$ \& $y_{10}$ \& $y_{11}$ \& $y_{12}$ \& $y_{13}$ \& $y_{14}$ \& $y_{15}$ \\
\&    \&   \&   \&   \&   \&   \\ 
$x$:  \&  $x_1$ \& $x_2$ \& $x_3$ \& $x_4$ \& $x_5$ \& $x_6$ \& $x_7$ \& $x_8$ \& $x_9$ \& $x_{10}$ \& $x_{11}$ \& $x_{12}$ \& $x_{13}$ \& $x_{14}$ \& $x_{15}$ \\
};
\draw[-] (m-3-4) -- (m-1-10);
\draw[-] (m-3-13) -- (m-1-13);
\end{tikzpicture} \end{center}
\caption{A clean edit-matching alignment of cost 15. The $x$-portion
  cost is $13$ because $13$ nodes are left unmatched in $x$. The
  $y$-portion cost is $2$ because there are two edge-free contiguous
  intervals in $y$ of size seven.}
\end{subfigure}

\begin{subfigure}[b]{1 \textwidth}
\begin{center} \begin{tikzpicture}
\matrix[name=m,nodes={anchor=base west},matrix of nodes,row sep={2em,between origins},column sep=\tabcolsep,ampersand replacement=\&]{
$y$: \&  $y_1$ \& $y_2$ \& $y_3$ \& $y_4$ \& $y_5$ \& $y_6$ \& $y_7$ \& $y_8$ \& $y_9$ \& $y_{10}$ \& $y_{11}$ \& $y_{12}$ \& $y_{13}$ \& $y_{14}$ \& $y_{15}$ \\
\&    \&   \&   \&   \&   \&   \\ 
$x$:  \&  $x_1$ \& $x_2$ \& $x_3$ \& $x_4$ \& $x_5$ \& $x_6$ \& $x_7$ \& $x_8$ \& $x_9$ \& $x_{10}$ \& $x_{11}$ \& $x_{12}$ \& $x_{13}$ \& $x_{14}$ \& $x_{15}$ \\
};
\draw[-] (m-3-9) -- (m-1-8);
\draw[-] (m-3-13) -- (m-1-13);
\end{tikzpicture} \end{center}
\caption{A clean edit-matching alignment of cost 14. The $x$-portion
  cost is $13$ because $13$ nodes are left unmatched in $x$. The
  $y$-portion cost is $1$ because one edge-free contiguous interval in
  $y$ of size seven.}
\end{subfigure}
\caption{Two examples of clean edit-matching alignments.}
\label{figcleanalignment}
\end{figure*}
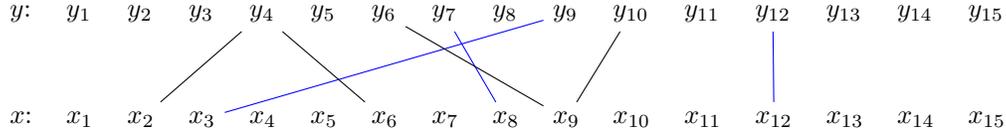
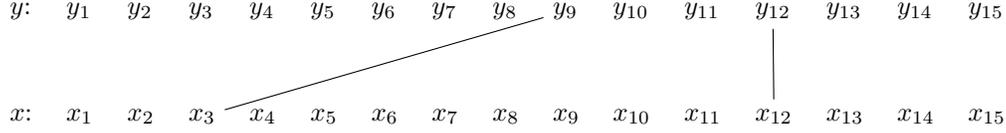
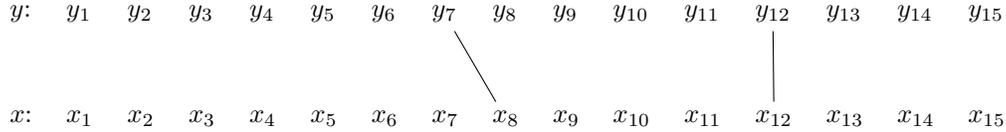

The remainder of the section is outlined as follows. In Subsection \ref{subsecmatching} we present an algorithm for the Clean Alignment Problem which runs in near linear time. In Subsection \ref{subsecreduction}, we then present a reduction from PDEP to the Clean Alignment Problem, completing the proof of Theorem \ref{thmmain}.

\subsection{The Clean Alignment Problem}\label{subsecmatching}

In this section we present an approximation algorithm for the Clean
Alignment Problem. The algorithm is given by Algorithm
\ref{alg:approxmatching}, and the key properties of the algorithm are
stated in Theorem \ref{thmcomparisonbased}.

For this section, consider $u, v \in \Sigma^{\le n}$ (meaning they are
of length at most $n$ over the alphabet $\Sigma$), and let $\Pi(u, v)$ denote the edit-matching
alignment given by the output of Algorithm \ref{alg:approxmatching} on
$u$ and $v$. Moreover, let $T$ be an optimal clean edit-matching
alignment between $u$ and $v$.

\begin{thm}
Algorithm \ref{alg:approxmatching} can be evaluated in time
$\tilde{O}(n)$. Moreover, if $t$ is the cost of $T$, then the the
expected number of edges in $T \setminus \Pi(u, v)$ is at most $O(t)$,
in expectation.
\label{thmcomparisonbased}
\end{thm}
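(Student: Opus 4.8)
The plan is to analyze the recursive algorithm (Algorithm~\ref{alg:approxmatching}) by charging the edges of the optimal clean alignment $T$ that the algorithm fails to recover against the cost $t$ of $T$. First I would set up the recursion tree: each recursive call operates on a contiguous interval of $u$ (the ``$x$-side'') paired with a contiguous interval of $v$ (the ``$y$-side''), and the call either reaches a base case (sizes too imbalanced, or $|u\text{-interval}| = 0$) or picks a random pivot block from the middle half of its $x$-interval, repeating up to $100\log n$ times until it finds a \emph{valid} pivot (one that edit matches exactly one $y$-block inside the current $y$-interval), then splitting both intervals at that pivot and recursing on the two halves. I would classify subproblems relative to $T$ as \emph{dead} (the $x$-interval and $y$-interval are wildly different sizes, or $T$ maps most of the $x$-interval outside the $y$-interval) or \emph{live} (the complement). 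For dead subproblems, the number of $T$-edges lost is bounded directly by the inherent misalignment, which is itself charged to $t$; the bulk of the work is the live subproblems.

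For live subproblems I would establish the \textbf{Pivot Pairing Property}: if a live subproblem selects $x^i$ as pivot, then $y^{\phi(i)}$ lies in its $y$-interval (here $\phi(i)$ is the $y$-block where $T$ sends $x^i$). Two consequences follow. (i) If $x^i$ is ``clean-good'' — meaning $x^i$ edit matches $y^{\phi(i)}$ and $T$ actually uses the edge $(x^i, y^{\phi(i)})$ — then selecting it as pivot causes the algorithm to split exactly along $T$, so no $T$-edge is lost at this node and the two children are again live. (ii) The only way to lose $T$-edges inside a live subproblem is via a \emph{dirty} pivot block $x^i$, i.e., one that does not cleanly match $y^{\phi(i)}$; and by the Pivot Pairing Property such an $x^i$ can only ever be paired with one of two specific $y$-blocks $y^{j_1}, y^{j_2}$ straddling $\phi(i)$. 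Using $(p,B)$-pseudorandomness (restated for the Clean Alignment Problem as a bound on how many $x$-letters can edit-match ``the wrong'' $y$-block), I would argue that the total number of dirty $x$-blocks is $O\!\bigl(\tfrac{1}{pB}\ed(x,y)\bigr)$, hence $O(t)$ of them after the reduction's scaling, so there are only $O(t)$ \emph{dirty pairs} $(x^i, y^j)$.

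Next I would bound, for a fixed dirty pair $(x^i,y^j)$, the expected number of $T$-edges destroyed, summed over all live subproblems at risk of selecting $x^i$ and matching it to $y^j$. The key quantitative claims are: (a) a live subproblem with $r$ blocks on its $x$-side contains $\Omega(r)$ valid pivot options (again from the Pivot Pairing Property plus pseudorandomness, ensuring most candidate blocks are cleanly matchable), so $x^i$ is selected with probability $O(1/r)$; (b) if $x^i$ is selected and mismatched to $y^j$, the number of $T$-edges lost is $O(|\phi(i)-j|)$ (blocks strictly between $T$'s target and the chosen target); (c) by Pivot Pairing, any such subproblem must contain both $y^j$ and $y^{\phi(i)}$, forcing $r \ge \sigma := \Omega(|\phi(i)-j|)$; and (d) for each dyadic band $R_\ell = [2^\ell\sigma, 2^{\ell+1}\sigma)$ there are $O(1)$ ancestor subproblems in the recursion tree containing $x^i$ with $r\in R_\ell$. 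Multiplying (a)$\times$(b) gives $O(|\phi(i)-j|/r)$ per subproblem; summing over the $O(1)$ subproblems in band $R_\ell$ gives $O(|\phi(i)-j|/(2^\ell\sigma)) = O(2^{-\ell})$; summing the geometric series over $\ell\ge 0$ gives $O(1)$ expected lost $T$-edges per dirty pair. Summing over the $O(t)$ dirty pairs yields $O(t)$ expected edges in $T\setminus\Pi(u,v)$. Finally, for the running time, each of the $\tilde O(1)$ levels of recursion touches $\tilde O(n)$ blocks, and each pivot trial evaluates $f$ only $\tilde O(1)$ times to decide validity (searching for an edit-matching $y$-block), for a total of $\tilde O(n)$ queries and $\tilde O(n)$ time, giving the claimed bound.

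The main obstacle I anticipate is part (d) — showing that only $O(1)$ ancestors of a fixed block $x^i$ in the recursion tree have $x$-side size in a given dyadic band. This is where the ``middle half'' pivoting rule must be used carefully: each recursive split removes the pivot from the middle half, so the $x$-side size shrinks geometrically along any root-to-leaf path, but one must rule out pathological sequences of lopsided splits that keep $x^i$ in a near-same-size interval for many steps. Controlling this, together with making the ``$\Omega(r)$ valid pivots'' claim (a) fully rigorous in the presence of dead-interval boundary effects and the up-to-$100\log n$ retry cap (which introduces a small failure probability that must be absorbed into the expectation or handled with a union bound over the $\tilde O(n)$ subproblems), will be the delicate part of the argument.
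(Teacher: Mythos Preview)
Your overall architecture---live/dead classification, the Pivot Pairing Property, dirty pairs, and the dyadic-band geometric sum---matches the paper's proof of Theorem~\ref{thmcomparisonbased} closely. However, there is a genuine conceptual gap in how you bound the number of dirty pivots and the number of valid pivots.

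You invoke $(p,B)$-pseudorandomness inside the Clean Alignment Problem, but that problem is stated purely in terms of an \emph{arbitrary} comparison function $f$; there is no pseudorandomness hypothesis available, and the theorem must hold for any $f$. The correct leverage is the \emph{cleanness} of $T$. If $(u_i, v_{\phi(i)}) \in T$, then by cleanness $v_{\phi(i)}$ is the \emph{only} letter of $v$ that edit matches $u_i$; combined with the Pivot Pairing Property (so $v_{\phi(i)}$ is in the current $v$-chunk), selecting $u_i$ as pivot yields $S = \{\phi(i)\}$ and the split is along $T$ with zero cost increase. Hence the ``dirty'' pivots are exactly the $u_i$ \emph{unmatched by $T$}, and there are at most $t$ of them (the $u$-portion cost), so at most $2t$ dirty pairs---no pseudorandomness needed. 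The same observation gives claim~(a): in a live subproblem, at least $9/10$ of the $u$-chunk is matched by $T$ into the $v$-chunk, and every such matched letter is automatically a valid pivot by cleanness; this is the paper's Lemma~\ref{lemvalidpivots}. The pseudorandomness of $x$ is used only later, in the reduction of Section~\ref{subsecreduction}, to guarantee that the optimal \emph{clean} alignment has small cost.

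Two minor corrections. Your concern about (d) is unfounded: since the pivot comes from the middle half, each child's $u$-chunk has size at most $\lceil 3|a|/4\rceil$, so along the chain of ancestors of $u_i$ the sizes drop by a factor $\ge 10/9$ at every step, giving at most one ancestor per band $[(10/9)^\ell\sigma,(10/9)^{\ell+1}\sigma)$. And for the running time, a pivot trial does not evaluate $f$ only $\tilde O(1)$ times---it scans the entire $v$-chunk to build $S$, costing $O(|v\text{-chunk}|)$ per trial; the $\tilde O(n)$ bound comes instead from the $O(\log n)$ recursion depth and the fact that the $v$-chunks at each depth are disjoint.
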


Note that removing an edge from an edit-matching alignment can
increase each of the $u$-portion and the $v$-portions of the cost by
at most $1$ each. Thus Theorem \ref{thmcomparisonbased} implies that
Algorithm \ref{alg:approxmatching} achieves constant multiplicative
error, in expectation, for the Clean Alignment Problem.

\begin{cor}
  Algorithm \ref{alg:approxmatching} solves the Clean Alignment
  Problem in time $\tilde{O}(n)$.
  \label{cormatching}
\end{cor}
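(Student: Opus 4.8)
The plan is to derive Corollary~\ref{cormatching} directly from Theorem~\ref{thmcomparisonbased}. The running-time claim is already part of that theorem, so all that remains is to verify that the output $\Pi(u, v)$ of Algorithm~\ref{alg:approxmatching} is a valid answer to the Clean Alignment Problem. By construction $\Pi(u, v)$ is an edit-matching alignment, so it suffices to show $\E[\mathrm{cost}(\Pi(u, v))] = O(t)$, where $t = \mathrm{cost}(T)$ and $T$ is an optimal clean edit-matching alignment between $u$ and $v$ (such a $T$ exists, since the empty alignment is vacuously clean).

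First I would record that the cost function is anti-monotone under deleting edges: removing an edge from an alignment can only increase the number of unmatched letters of $u$, and can only increase the number of indices $i$ for which $v_i, \dots, v_{i+6}$ are all unmatched. In particular $\Pi(u, v) \cap T$ is a sub-alignment of $\Pi(u, v)$, so $\mathrm{cost}(\Pi(u, v)) \le \mathrm{cost}(\Pi(u, v) \cap T)$.

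The main step is to compare $\mathrm{cost}(\Pi(u, v) \cap T)$ with $\mathrm{cost}(T) = t$. Passing from $T$ to $\Pi(u, v) \cap T$ deletes exactly the edges of $T \setminus \Pi(u, v)$, so the only letters that become newly unmatched are the at most $|T \setminus \Pi(u, v)|$ left-endpoints and at most $|T \setminus \Pi(u, v)|$ right-endpoints of those edges. Each newly unmatched letter of $u$ contributes at most $1$ to the $u$-portion cost, and each newly unmatched letter of $v$ lies in at most seven length-seven windows, hence can make at most seven additional windows entirely unmatched. Therefore $\mathrm{cost}(\Pi(u, v) \cap T) \le t + 8\,|T \setminus \Pi(u, v)|$; this is the quantitative form of the remark preceding the corollary.

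Finally, combining the two inequalities and taking expectations yields $\E[\mathrm{cost}(\Pi(u, v))] \le t + 8\,\E[|T \setminus \Pi(u, v)|] \le t + O(t) = O(t)$, where the last step is exactly Theorem~\ref{thmcomparisonbased}. Together with the $\tilde O(n)$ running time this is the corollary. I do not expect a real obstacle here; the only point needing care is the $v$-portion bound, where the increase must be accounted for via the number of freed right-endpoints (each contained in at most seven windows) rather than edge-by-edge, so that long runs of unmatched letters of $v$ cause no trouble.
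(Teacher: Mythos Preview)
Your proposal is correct and follows essentially the same approach as the paper: both deduce the corollary from Theorem~\ref{thmcomparisonbased} by bounding the cost increase incurred when edges of $T \setminus \Pi(u,v)$ are removed from $T$. Your argument is in fact a bit more careful than the paper's brief remark: you make explicit the anti-monotonicity step $\mathrm{cost}(\Pi(u,v)) \le \mathrm{cost}(\Pi(u,v)\cap T)$, and you correctly observe that freeing a single letter of $v$ can raise the $v$-portion cost by up to~$7$ (the paper states ``at most~$1$,'' which is loose but harmless since only an $O(1)$ bound is needed).
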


\begin{algorithm*}
\caption{Approximation Algorithm for The Clean Alignment
  Problem \label{alg:approxmatching}} Input: Strings $u, v \in
\Sigma^{\le n}$, comparison function
$f$. \\ Output: An edit-matching alignment between $u$ and $v$.
\begin{enumerate}
\item If $|v| \ge 8|u| + 12$ or  $|u| \ge 2|v|$, return $\emptyset$.
\item If $|u| = 0$, return $\emptyset$.
\item For $100 \log n$ attempts:
  \begin{enumerate}
  \item Randomly sample a letter $u_i$ satisfying $\frac{1}{4}|u| \le i \le \Big\lceil \frac{3}{4}|u| \Big\rceil $.
  \item In time $O(|v|)$, construct the set $S = \{j \mid f(i, j) = 1\}$.
  \item If $|S|$ is of size one, containing a single element $j$, then:
    \begin{enumerate}
    \item Initialize an output set $O$ containing the single edge $(u_i, v_j)$
    \item Recurse on $u_1 \cdots u_{i  -1}$ and $v_1 \cdots v_{j - 1}$, and add the resulting edges to $O$.
    \item Recurse on $u_{i + 1} \cdots u_{|u|}$ and $v_{j + 1} \cdots v_{|v|}$, and add the resulting edges to $O$.
    \item Return $O$.
    \end{enumerate}
  \end{enumerate}
\item Return $\emptyset$.
\end{enumerate}
\end{algorithm*}

\begin{remark}
Algorithm \ref{alg:approxmatching} is motivated in part by our
previous work on efficiently embedding Ulam distance (edit distance
over permutations) into Hamming space with (optimal) expected
distortion $O(\log n)$ \cite{icalp}. In particular, a similar analysis
to that of \cite{icalp} could be employed in order to prove that
Algorithm~\ref{alg:approxmatching} has approximation ratio $O(\log
n)$, in expectation. The key difference between Algorithm~\ref{alg:approxmatching} and the embedding of \cite{icalp} is that
Algorithm \ref{alg:approxmatching} is permitted to try $O(\log n)$
options for pivots in order to find a ``good'' one, while the
embedding of \cite{icalp} (by virtue of being oblivious to $y$) is
forced to blindly select a pivot. Surprisingly, this minor difference
allows for the more sophisticated analysis of Algorithm
\ref{alg:approxmatching} below, resulting in a constant expected
approximation ratio.
\end{remark}

Before analyzing Algorithm \ref{alg:approxmatching}, we first
introduce several conventions for talking about the algorithm's
subproblems.

\begin{defn}
Consider a subproblem $A$ which involves a substring
$u_i \cdots u_j$ of $u$ and a substring $v_k \cdots v_l$ of $v$. We
call $u_i \cdots u_j$ the \emph{$u$-chunk} of $A$ and $v_k \cdots v_l$
the \emph{$v$-chunk} of $A$. 
\end{defn}

\begin{defn}
We say that the \emph{pivot options} for
$A$ are the $u_i$'s in the middle half of $A$'s $u$-chunk (i.e., $\frac{1}{4}|u| \le i \le \lceil \frac{3}{4}|u|\rceil$). 
Moreover, we say a pivot option $u_i$ is a \emph{valid pivot} if $f(i, j) = 1$ for exactly one $v_j$ in the $v$-chunk of $A$.
\end{defn}

\begin{defn}
For any subproblem $A$, we define $T \cap A$ to be the set of edges in $T$
 which go between the $u$-chunk and $v$-chunk of $A$. That is, $T \cap A$
  is formally $T \cap \{(u_i, v_j) \mid  u_i \in a, v_j \in b\}$, where $a$ is the 
  $u$-chunk and $b$ is the $v$-chunk of $A$. 
\end{defn}

We classify subproblems into two types, live subproblems and dead subproblems. Dead subproblems will act essentially as base cases in the analysis of the algorithm. Live subproblems, on the other hand, will prove far more interesting to analyze.

\begin{defn}
  We call a subproblem with $u$-chunk $a$ and $v$-chunk $b$ \emph{dead} if at least one of the following holds:
  \begin{itemize}
  \item  $|b| \ge 8|a| + 12$ or $|a| \ge 2|b|$.
  \item $|a| = 0$.
  \item At least $|a| / 10$ elements of $a$ are not matched by $T$ to
    an element of $b$.
  \end{itemize}
  Otherwise, we call the subproblem \emph{live}.
\end{defn}

One important property of live subproblems is that they have many valid options for the algorithm to select as a pivot.

\begin{lem}
For a live subproblem $A$, at least $4/5$ of $A$'s pivot options are valid pivots.
\label{lemvalidpivots}
\end{lem}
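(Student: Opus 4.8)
The plan is to count the pivot options of $A$ that fail to be valid pivots, and show this number is at most $1/5$ of the total. A pivot option $u_i$ fails to be valid for one of two reasons: either (a) there is \emph{no} $v_j$ in the $v$-chunk of $A$ with $f(i,j)=1$, or (b) there are \emph{two or more} such $v_j$. I would bound the number of bad pivot options of each type separately.

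First I would handle type (a). Since $A$ is live, fewer than $|a|/10$ elements of the $u$-chunk $a$ are unmatched by $T$; equivalently, at least $9|a|/10$ elements of $a$ are matched by $T$ to some element of the $v$-chunk $b$. Now recall that $T$ is an edit-matching alignment, so every edge $(u_i, v_j) \in T$ satisfies $f(i,j) = 1$; hence for every $u_i \in a$ matched by $T$ to a $v_j \in b$, that $v_j$ witnesses $f(i,\cdot) = 1$ inside the $v$-chunk, so $u_i$ cannot be a type-(a) failure. Therefore at most $|a|/10$ elements of $a$, hence at most $|a|/10$ pivot options, are type-(a) failures. (Here I am using that the pivot options form a subset of $a$, so the bound on bad elements of $a$ directly bounds bad pivot options.)

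Next, type (b): pivot options $u_i$ for which at least two letters of the $v$-chunk $b$ edit match with $u_i$. The key is that $T$ is \emph{clean}, so for every edge $(u_i, v_j) \in T$, $u_i$ is the only letter of $u$ edit matching $v_j$ and $v_j$ is the only letter of $v$ edit matching $u_i$. Thus any $u_i$ that is matched by $T$ to an element of $b$ automatically has exactly one edit-matching partner in \emph{all} of $v$, hence exactly one in $b$ — so such a $u_i$ is not a type-(b) failure either. Consequently the type-(b) failures are also confined to the $\le |a|/10$ elements of $a$ not matched by $T$ into $b$. Combining, the total number of non-valid pivot options is at most $|a|/10$, but I should be a little careful: the middle-half pivot options number roughly $|a|/2$ (more precisely at least $\lceil |a|/2 \rceil$), so the fraction of invalid pivot options is at most $(|a|/10)/(|a|/2) = 1/5$, giving the claim. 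The main thing to get right is that both failure modes are absorbed by the \emph{same} set of $<|a|/10$ $T$-unmatched elements — i.e. that being $T$-matched into $b$ simultaneously rules out having zero and having $\ge 2$ edit-matching partners in $b$ — which is exactly where the edit-matching and cleanness hypotheses on $T$ are each used once. A minor nuisance is the ceiling/rounding in the definition of pivot options when $|a|$ is small, which I would dispatch by noting $|a| \ge 1$ in a live subproblem (since $|a|=0$ is dead) and checking the inequality $|a|/10 \le \frac15 \lceil |a|/2\rceil$ holds for all $|a| \ge 1$.
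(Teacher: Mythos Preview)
Your proposal is correct and follows essentially the same approach as the paper: both argue that any element of the $u$-chunk matched by $T$ into the $v$-chunk is automatically a valid pivot (using that $T$ is an edit-matching alignment and that $T$ is clean), so the invalid pivot options are confined to the fewer than $|a|/10$ elements not matched by $T$, which is at most $1/5$ of the $\ge |a|/2$ pivot options. Your explicit separation into type-(a) and type-(b) failures is slightly more detailed than the paper's one-line appeal to cleanness, but the argument is the same.
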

\begin{proof}
Let $a$ be the $u$-chunk and $b$ be the $v$-chunk of $A$. By the definition of a live subproblem, fewer than $1/10$ of the elements in $a$ fail to be matched by $T$ to an element of $b$. Since at least $1/2$ of the elements of $a$ are pivot options, at least $.8$ of the pivot options must be matched by $T$ to an element of $b$. Because $T$ is a clean edit-matching alignment, any pivot-option which is matched by $T$ to an element of $b$ must be a valid pivot.
\end{proof}

When the algorithm decides on a pivot on which to split a subproblem $A$, the quality of that pivot can be captured by what we refer to as the cost increase of $A$: 
\begin{defn}
 For a live subproblem $A$ with subproblems $B$ and $C$, we define the
 \emph{cost increase} $c(A)$ of $A$ as follows. If the algorithm fails to select a pivot, or selects a 
 pivot $u_i$ in $A$ and matches it with some $v_j$ for which $(u_i, v_j) \in T$, 
 then the cost increase is $c(A) = 0$. Otherwise the cost increase is $c(A) = |T \cap A| - |T \cap B| - |T \cap C|$. That is, $c(A)$ is the number of edges in $T$ that are cut by the selection of the subproblems $B$ and $C$.
\end{defn}

The key technical challenge in the section will be to bound the cost
increases summed over all live subproblems. We begin by presenting three
properties of live subproblems which will be useful in the
analysis. These properties rely on the notion of a \emph{position map}, defined as follows.

\begin{defn}
 The position map $\phi : [u] \rightarrow [v]$ maps positions in $u$
 to the position in $v$ to which they are assigned by the alignment
 $T$. Formally, if $u_l$ is edit matched to some $v_k$ by $T$, we
 define $\phi(l) = k$, and otherwise we recursively define $\phi(l) =
 \phi(l - 1)$ (or $\phi(l) = 1$ if $l = 1$).
\end{defn}

\begin{lem}  
  Suppose a live subproblem $A$ selects some $u_i$ as a pivot and
  matches $u_i$ to some $v_j$. Then the following three properties
  hold:
  \begin{enumerate}
  \item The cost increase of $A$ will be at most $|\phi(i) - j|
    + 1$.
  \item The letter $v_{\phi(i)}$ is contained in $A$'s $v$-chunk.
  \item The size of $A$'s $u$-chunk is at least
    $$\max\left( 1, \frac{|\phi(u_i) - j| - 11}{8} \right).$$
  \end{enumerate}
  \label{lemproperties}
\end{lem}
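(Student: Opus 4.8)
The plan is to establish the three properties in the order (2), (1), (3): property~(2) --- essentially the Pivot Pairing Property --- is the heart of the matter and is the only place where the hypothesis that $A$ is \emph{live} gets used, while (1) and (3) then follow from (2) together with the non-crossing structure of $T$ and a glance at the base cases of Algorithm~\ref{alg:approxmatching}. Throughout I would write $a$ for $A$'s $u$-chunk and $b = v_c v_{c+1} \cdots v_d$ for $A$'s $v$-chunk, and keep in mind two monotonicity facts: $\phi$ is non-decreasing, and the edges $(u_s, v_t)$ of $T$ are non-crossing, so a larger $u$-index forces a larger $v$-index.

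For property~(2) I would argue by contradiction. Suppose $v_{\phi(i)} \notin b$, and say $\phi(i) < c$ (the case $\phi(i) > d$ is symmetric). Then for every letter $u_s$ of $a$ with $s \le i$ we have: either $u_s$ is unmatched by $T$, or $T$ matches it to some $v_t$ with $t = \phi(s) \le \phi(i) < c$; in both cases $u_s$ is not matched by $T$ to a letter of $b$. Since $u_i$ is a pivot option of $A$, at least $\frac14|a|$ of the letters of $a$ occur no later than $u_i$, so at least $\frac14|a| \ge \frac{1}{10}|a|$ letters of $a$ are not matched by $T$ to a letter of $b$, making $A$ dead --- contradicting the hypothesis. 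When $\phi(i) > d$ one instead uses that at least $\frac14|a|$ letters of $a$ occur no earlier than $u_i$ (this follows from the pivot-option bound $i \le \lceil \frac34|a|\rceil$), and runs the symmetric argument.

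Property~(1) is then short: if $(u_i, v_j)\in T$ the cost increase is $0$ and $\phi(i)=j$, so the bound is trivial; otherwise $c(A)$ equals the number of edges of $T\cap A$ lying in neither $T\cap B$ nor $T\cap C$ (the ``cut'' edges, the two being disjoint subsets of $T\cap A$). I claim every cut edge $(u_s,v_t)$ has $t$ between $j$ and $\phi(i)$ inclusive: if $s=i$ then $t=\phi(i)$; if $s<i$ then being cut forces $t\ge j$ while monotonicity forces $t\le\phi(i)$; and if $s>i$ the two inequalities reverse. Since distinct edges of the matching $T$ use distinct $v$-indices, there are at most $|\phi(i)-j|+1$ cut edges, giving $c(A)\le|\phi(i)-j|+1$. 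For property~(3), the bound $\ge 1$ holds because $A$ selected a pivot and hence $|a|\ge 1$; for the other bound, $v_j\in b$ since the algorithm only matches the pivot to a letter of $b$, and $v_{\phi(i)}\in b$ by property~(2), so $|\phi(i)-j|\le d-c=|b|-1$; moreover $A$ did not return at Step~1 of Algorithm~\ref{alg:approxmatching}, so $|b|<8|a|+12$. Combining and rearranging gives $|a|>\frac{|\phi(i)-j|-11}{8}$.

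The main obstacle is property~(2): it is the essential use of liveness, and the delicate point is to combine the pivot-option window (positions roughly $\frac14$ through $\frac34$ of the $u$-chunk) with the monotonicity of $\phi$ so as to exhibit $\Omega(|a|)$ letters of $a$ that $T$ fails to send into $b$, and to make the constants ($\frac14$ versus the $\frac1{10}$ threshold, and the $+12$ versus $+11$ slack in property~(3)) line up. Once property~(2) is in hand, properties~(1) and~(3) are routine bookkeeping about the recursion and the non-crossing property of $T$.
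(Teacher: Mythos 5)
Your proposal is correct and follows essentially the same route as the paper's proof: property (2) via the liveness/monotonicity-of-$\phi$ contradiction using the pivot-option window, property (1) by locating all cut edges of $T\cap A$ between $v$-indices $j$ and $\phi(i)$, and property (3) by combining property (2) with the size constraint $|b| < 8|a|+12$ from the recursion's base case. You merely spell out the constants and case analysis in more detail than the paper does (and your property (3) bound is in fact marginally tighter), so no further comment is needed.
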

\begin{proof}
 We begin by proving the first property. Since every edge $(u_r, v_s)
 \in T \cap A$ either satisfies $r \ge i$ and $s \ge \phi(i)$ or $r <
 i$ and $s < \phi(i)$, all edges $(u_r, v_s)$ in the set
 $$(T \cap A) \setminus ((T \cap B) \cup (T \cap C))$$ must satisfy
 the property that $s$ is between $j$ and $\phi(i)$ inclusive. The
 number $c(A)$ of such edges can therefore be at most
 $$|\phi(i) - j| + 1.$$ 

 To prove the second property, suppose for contradiction that
 $v_{\phi(i)}$ is not contained in $A$'s $v$-chunk. Then because
 $\phi$ is weakly increasing, it must be that $v_{\phi(k)}$ is not in
 $A$'s $v$-chunk for at least $1/4$ of the $u_{k}$'s in $A$'s
 $u$-chunk. But this would mean that the subproblem $A$ is dead, a
 contradiction.

 Finally, to prove the third property, notice that the second property
 forces the size of $A$'s $v$-chunk to be at least $|\phi(u_i) - j| +
 1$. By the definition of a live subproblem, it follows that the size
 of $A$'s $u$-chunk must be at least
 $$\max\left( 1, \frac{|\phi(u_i) - j_q| - 11}{8} \right),$$
 as desired.
\end{proof}

Next we present the key technical lemma of the section, in which we
bound the sum of the cost increases of all live subproblems.

 \begin{lem}
Let $t$ be the cost of the optimal clean edit-matching $T$. Let $S$ be the sum of the cost increases over all live subproblems. Then $\E[S] \le O(t)$. 
\label{lemmain}
 \end{lem}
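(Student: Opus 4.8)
The plan is to bound $\E[S]$ by summing over the sources of cost increase separately, following the technical overview. A live subproblem $A$ contributes $c(A) > 0$ only when it fails to pick a valid pivot matched to an edge of $T$, or picks a valid pivot $u_i$ but matches it to some $v_j$ with $(u_i, v_j) \notin T$. By Lemma~\ref{lemvalidpivots}, when $A$ is live at least $4/5$ of its pivot options are valid pivots. Moreover, because $T$ is clean, any valid pivot $u_i$ that $T$ matches into $b$ gets matched by the algorithm to exactly $v_{\phi(i)}$, which by Lemma~\ref{lemproperties}(2) lies in $b$; for such pivots $c(A)=0$. So I would first observe that the only way $A$ incurs positive cost increase is: (i) with the tiny probability (at most $(1/5)^{100 \log n} = n^{-\Theta(1)}$) that all $100 \log n$ sampled pivot options are invalid, in which case $c(A) \le |T \cap A| \le n$; summing over the $\poly(n)$ subproblems this contributes $o(t)$ — actually $o(1)$ — in expectation; or (ii) $A$ picks a valid pivot $u_i$ that $T$ does \emph{not} match into $b$, i.e. $u_i$ is a "dirty" pivot, and matches it to the unique $v_j$ with $f(i,j)=1$, paying at most $|\phi(i)-j|+1$ by Lemma~\ref{lemproperties}(1).

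The heart of the argument is bounding the expected total contribution of case (ii). Fix a position $u_i$ in $u$. Charge to each pair $(u_i, v_j)$ with $j \ne \phi(i)$ and $f(i,j)=1$ the expected cost increase it causes across all live subproblems whose $u$-chunk contains $u_i$. Let $\sigma = \max(1, (|\phi(i)-j|-11)/8)$. By Lemma~\ref{lemproperties}(3), any live subproblem that matches $u_i$ to $v_j$ has $u$-chunk of size at least $\sigma$. Now I would group these candidate subproblems by the dyadic scale of their $u$-chunk size: for $R_\ell = [2^\ell \sigma, 2^{\ell+1}\sigma)$, the nested structure of the recursion means $u_i$ lies in at most a constant number (in fact at most two, since chunk sizes roughly halve down each recursion path) of subproblems whose $u$-chunk size falls in $R_\ell$. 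For such a subproblem $A$, $u_i$ is one of $\ge \frac{1}{2}|u\text{-chunk}| = \Omega(2^\ell \sigma)$ pivot options each sampled with equal probability, so in a single sampling attempt $u_i$ is chosen with probability $O(1/(2^\ell\sigma))$; over $100\log n$ attempts this is still $O(1/(2^\ell \sigma))$ up to a $\log$ factor — actually one must be slightly careful, but the probability of $u_i$ being the selected pivot is at most $O(1/(2^\ell \sigma))$ since selection requires choosing $u_i$ on some attempt and failing on all earlier ones. Conditioned on selection and on matching to $v_j$, the cost increase is at most $|\phi(i)-j|+1 = O(\sigma)$. Hence the expected cost increase from $(u_i,v_j)$ at scale $\ell$ is $O(1/(2^\ell \sigma)) \cdot O(\sigma) = O(2^{-\ell})$, and summing over $\ell \ge 0$ gives $O(1)$ per pair $(u_i,v_j)$.

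It remains to bound the number of pairs $(u_i,v_j)$ with $j\ne\phi(i)$ and $f(i,j)=1$. This is where cleanness and, ultimately, the pseudorandomness of $x$ enter — but here we only need the abstract statement: in a clean edit-matching alignment $T$, every edge $(u_i,v_{\phi(i)})\in T$ has $v_{\phi(i)}$ the unique letter edit-matching $u_i$; so if $u_i$ is $T$-matched at all, it has \emph{no} other partner $v_j$ and contributes nothing to case (ii). Thus the only positions $u_i$ contributing are those \emph{not} matched by $T$, i.e. unmatched in the $u$-portion, of which there are exactly the $u$-portion cost $\le t$. Wait — I need each such $u_i$ to contribute $O(1)$ total, but a single unmatched $u_i$ could edit-match many $v_j$'s. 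However, for a dirty pivot the algorithm only proceeds when $|S|=1$, i.e. $u_i$ edit-matches exactly one $v_j$; if $u_i$ edit-matches two or more letters of the current $v$-chunk it is not a valid pivot and causes no cost increase. So per unmatched $u_i$ there is effectively at most one "dangerous" $v_j$ per subproblem, and the dyadic argument above already accounts for all subproblems containing $u_i$; hence each unmatched $u_i$ contributes $O(1)$ expected cost increase in total, for a grand total of $O(t)$, plus the negligible $o(1)$ from case (i), giving $\E[S] \le O(t)$.

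The main obstacle I anticipate is making the "at most a constant number of subproblems containing $u_i$ with $u$-chunk size in a given dyadic range $R_\ell$" claim fully rigorous, together with the clean handling of the failure probability in the pivot-sampling loop and the subtlety that within one subproblem the "dangerous partner" $v_j$ of $u_i$ is not fixed in advance but depends on the subproblem's $v$-chunk. I would resolve the latter by charging, for each live subproblem $A$ containing $u_i$, the event "$u_i$ selected as pivot in $A$ and matched to some $v_j\ne v_{\phi(i)}$" and noting $|\phi(i)-j|\le |v\text{-chunk of }A|$ while $|v\text{-chunk}| = O(|u\text{-chunk}|)$ for live subproblems, so the per-subproblem expected contribution is $O(1/|u\text{-chunk}|)\cdot O(|u\text{-chunk}|)=O(1)$ — and then the dyadic grouping is used only to sum these $O(1)$ contributions over the $O(\log n)$ scales while keeping the total at $O(1)$ rather than $O(\log n)$, exploiting that the relevant $|\phi(i)-j|$ forces $|u\text{-chunk}| \ge \sigma$ so small scales are vacuous.
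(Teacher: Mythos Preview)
Your overall architecture matches the paper's: charge cost increases to the selected pivot $u_i$, observe that $T$-matched $u_i$'s never incur positive cost (cleanness), and for each of the at most $t$ unmatched $u_i$'s argue an $O(1)$ expected total contribution via a geometric sum over scales. The probability bound $\Pr[u_i\text{ is the selected pivot}] = O(1/r)$ (with $r$ the $u$-chunk size) is also right, since among valid pivots each is equally likely to be the first one sampled and there are $\Omega(r)$ valid pivots by Lemma~\ref{lemvalidpivots}.

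The genuine gap is exactly where you yourself flag it (``a single unmatched $u_i$ could edit-match many $v_j$'s''), and neither of your two patches closes it. Your first patch---``at most one dangerous $v_j$ per subproblem, and the dyadic argument already accounts for all subproblems''---does not work because the dyadic argument you wrote uses a \emph{fixed} $\sigma = \sigma(u_i,v_j)$ as the floor for the geometric sum; if $v_j$ varies across subproblems, there is no single $\sigma$ to anchor the sum. Your second patch bounds the per-subproblem expected contribution by $O(1/r)\cdot O(r)=O(1)$ using $|\phi(i)-j|\le|v\text{-chunk}|=O(r)$; that is correct but only yields $O(\log n)$ after summing over the $O(\log n)$ scales containing $u_i$, hence $O(t\log n)$ rather than $O(t)$. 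Your closing sentence (``small scales are vacuous'') again presupposes a fixed $\sigma$, which you do not have.

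The missing observation is short but essential: by Lemma~\ref{lemproperties}(2), every live subproblem $A$ containing $u_i$ also contains $v_{\phi(i)}$ in its $v$-chunk. Therefore, if $u_i$ is a valid pivot in $A$ (i.e.\ edit-matches exactly one letter of the $v$-chunk), that unique match must be either the nearest edit-match $v_{j_1}$ to the left of $v_{\phi(i)}$ or the nearest edit-match $v_{j_2}$ to the right; any other edit-match of $u_i$ would force one of $v_{j_1},v_{j_2}$ to also lie in the $v$-chunk, making $|S|\ge 2$. Thus across \emph{all} live subproblems there are at most two pairs $(u_i,v_{j_1})$ and $(u_i,v_{j_2})$ to which cost can ever be attributed. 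Now you can legitimately fix $q\in\{1,2\}$, set $\sigma=\max(1,(|\phi(i)-j_q|-11)/8)$, run your dyadic sum, and get $O(1)$ per pair, hence $O(1)$ per unmatched $u_i$ and $O(t)$ overall. This is precisely what the paper does.
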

\begin{proof}
Rather than attributing the cost increase $c(A)$ to the subproblem
$A$, we will instead attribute the cost to $u_i$, where $u_i$ is the
pivot selected within the subproblem. If a letter $u_i$ is edit
matched by $T$, then no non-zero cost increase will ever be attributed
to it, since the only way $u_i$ can be selected as a pivot is if it is
to be correctly matched with $v_{\phi(i)}$. (Recall, in particular,
that $T$ is a \emph{clean} edit-matching alignment.) If, on the other
hand, $u_i$ is unmatched by $T$, then we will prove that the expected
sum of cost increases attributed to it is $O(1)$. Since there are
$O(t)$ such $u_i$'s (recall $t$ is the cost of $T$), this will
complete the proof.
    
For the rest of the proof, consider some $u_i$ not matched by $T$. By
the second part of Lemma \ref{lemproperties}, in order for a
subproblem $A$ to attribute its cost increase to $u_i$, it must be
that $u_i$ is in $A$'s $u$-chunk and $v_{\phi(i)}$ is in $A$'s
$v$-chunk. Moreover, because $v_{\phi(i)}$ is in $A$'s $v$-chunk,
there are at most two options for the letter $v_j$ which the algorithm
pairs $u_i$ with -- namely, $v_{j}$ must either the first letter
$v_{j_1}$ to $v_{\phi(i)}$'s left for which $f(i, j_1) = 1$, or it
must be the first such letter $v_{j_2}$ to $v_{\phi(i)}$'s right. We
say that the cost increase of $A$ is attributed to the $(u_i,
v_{j_1})$ pair if $j = j_1$, and is attributed to the $(u_i, v_{j_2})$
pair if $j = j_2$. For the rest of the proof, we will consider $j_q$
for some $q \in \{1, 2\}$, and show that the total cost increases
attributed to the pair $(u_i, v_{j_q})$ is at most $O(1)$, in
expectation.

Define $R_l$ to be the range $[(10/9)^l \sigma, (10/9)^{l +
    1}\sigma]$, for $l \ge 0$ and for
        $$\sigma = \max \left( 1, \frac{|\phi(i) - j_q| - 11}{8} \right).$$ 

For each $R_l$, there is at most one live subproblem $A$ whose
$u$-chunk both has a size in the range $R_l$ and contains
$u_i$. Moreover, by Lemma \ref{lemvalidpivots}, the probability that
the subproblem picks $u_i$ as a pivot is at most
$O\left(\frac{1}{(10/9)^l\sigma}\right)$. Since any live subproblem
which attributes its cost increase to the pair $(u_i, v_{j_q})$ must
have its $u$-chunk be of a size at least $\sigma$ (by the third part
of Lemma \ref{lemproperties}), the $u$-chunk's size must be in one of
the ranges $S_l$. It follows that the expected number of subproblems
which attribute their cost increase to the pair $(u_i, v_{j_q})$ is at
most
    $$O\left(\sum_{l = 0}^\infty \frac{1}{(10/9)^l \sigma} \right) \le
O\left( \frac{1}{\sigma}\right) = O\left(\frac{1}{|\phi(i) -
  j_q|}\right).$$ By the first part of Lemma \ref{lemproperties}, each
such subproblem has cost increase $O(|\phi(i) - j_q|)$. It follows
that the total cost attributed to the pair $(u_i, v_{j_q})$ is at most
$O(1)$, in expectation, completing the proof.
\end{proof}

Lemma \ref{lemmain} shows that the algorithm behaves well on live subproblems. In order to show 
that the same is true for dead subproblems, we introduce the notions of induced and surplus costs. 

\begin{defn}
The \emph{induced cost} of a subproblem $A$ with $u$-chunk $a$ and
$v$-chunk $b$ is the cost of the clean edit-matching alignment between
$a$ and $b$ obtained by restricting $T$ to the two substrings (i.e.,
the alignment with edges $T \cap A$).

The \emph{surplus cost} of the subproblem is the number of edges in
$T$ between $a$ and $b$ that do not appear in the algorithm's output
(i.e., $\Big|(T \cap A) \setminus \Pi(u, v) \Big|$).
\end{defn}

The ratio of a subproblem's induced cost to its surplus cost is a natural measure of how well $\Pi$ performs on the subproblem. 
The next lemma shows that, in this regard, Algorithm \ref{alg:approxmatching} behaves well on dead subproblems.

\begin{lem}
 The surplus cost of the return value of a dead subproblem is at most
 $O(c)$, where $c$ is the induced cost of the subproblem.
 \label{lemdeadsubproblems}
\end{lem}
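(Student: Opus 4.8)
The plan is to reduce everything to the trivial observation that the surplus cost of a subproblem $A$ is at most $|T \cap A|$: it is by definition $|(T \cap A) \setminus \Pi(u,v)|$, and a set difference can only remove elements. So it suffices to prove that $|T \cap A| \le O(c)$ for every \emph{dead} subproblem $A$, where $c$ is the induced cost of $A$. In particular, we never need to reason about what Algorithm~\ref{alg:approxmatching} actually does on the subproblem. Write $a$ for the $u$-chunk and $b$ for the $v$-chunk of $A$, and note throughout that $|T \cap A| \le \min(|a|, |b|)$, since $T$ is a matching and every edge of $T \cap A$ has one endpoint in $a$ and one in $b$. We then split into cases according to which of the three defining conditions of deadness holds (at least one must).

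If $|a| = 0$, then $T \cap A = \emptyset$ and the claim is immediate. Next suppose the size imbalance $|a| \ge 2|b|$ holds: then $T$ matches at most $|b|$ letters of $a$ within $A$, so the $x$-portion of the induced cost is at least $|a| - |b| \ge |a|/2 \ge |b| \ge |T \cap A|$, giving $c \ge |T \cap A|$. Alternatively suppose $|b| \ge 8|a| + 12$. Here I would count length-seven windows in $b$: $T \cap A$ matches at most $|a|$ letters of $b$, and each matched position lies in at most $7$ of the $|b| - 6$ windows of seven consecutive letters of $b$, so at least $|b| - 6 - 7|a| \ge |a| + 6 \ge |T \cap A|$ windows are entirely unmatched, whence the $y$-portion of the induced cost alone is at least $|T \cap A|$ and again $c \ge |T \cap A|$.

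Finally, in the remaining case at least $|a|/10$ letters of $a$ are not matched by $T$ to an element of $b$; these contribute to the $x$-portion of the induced cost, so $c \ge |a|/10$, while $|T \cap A| \le (9/10)|a| \le 9c$. Collecting the cases, $|T \cap A| \le 9c = O(c)$ for every dead subproblem, and hence the surplus cost is $O(c)$. There is no real obstacle: the only point requiring a moment's care is the window-counting estimate in the $|b| \ge 8|a|+12$ subcase (each matched position of $b$ is contained in at most seven of the relevant windows, and there are $|b|-6$ of them in total), and the key simplification is noticing that the bound $\text{surplus cost} \le |T \cap A|$ makes the algorithm's behavior on dead subproblems irrelevant.
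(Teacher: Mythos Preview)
Your proof is correct and follows essentially the same approach as the paper: both argue case-by-case over the three deadness conditions, both use the trivial bound that surplus cost is at most $|T\cap A|\le |a|$, and both handle the $|b|\ge 8|a|+12$ case by the same window-counting argument (at most $|a|$ matched positions in $b$, each hitting at most seven length-seven windows). The only cosmetic difference is that the paper phrases its goal as showing $c\ge \Omega(|a|)$ in the nontrivial cases, whereas you bound $|T\cap A|$ by $O(c)$ directly; the underlying arithmetic is identical.
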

\begin{proof}
  Let $a$ be the $u$-chunk and $b$ be the $v$-chunk of a dead
  subproblem.
  
  If $|a| = 0$, then the surplus cost of the subproblem is $0$, and is
  thus trivially at most $O(c)$.
  
  In the remaining cases, we will show that the induced cost $c$ is at
  least $\Omega(|a|)$, making the surplus cost $O(c)$ trivially.

  If at least $1/10$ of the elements of $a$ are not matched by $T$
  to an element of $b$, then the induced cost $c$ is at least $|a| /
  10$, as desired. Similarly, if $|a| \ge 2|b|$, then at least half of
  the elements of $a$ will not be matched by an $T$, implying that the
  induced cost $c$ is at least $|a| / 2$.

  If $|b| \ge 8 |a| + 12$, then the $b$-portion cost of any
  edit-matching alignment between $a$ and $b$ must be at least
  $|a|$. In particular, the number of edges in such an alignment can
  be at most $|a|$, and each edge can be contained in at most $7$
  contiguous substrings of length $7$ in $b$. As a result, at most
  $7|a|$ of the contiguous substrings of length $7$ in $b$ can contain a letter incident to an edge. Since $b$ contains at least $8|a|$ contiguous
  substrings of length $7$, the $b$-portion cost of any edit matching
  alignment must be at least $|a|$.
\end{proof}

Combining Lemmas \ref{lemmain} and \ref{lemdeadsubproblems}, we are now prepared to prove Theorem \ref{thmcomparisonbased}.

\begin{proof}[Proof of Theorem \ref{thmcomparisonbased}]
  It is straightforward to prove that Algorithm
  \ref{alg:approxmatching} runs in time $\tilde{O}(n)$. For the rest
  of the proof we focus on the cost of the edit-matching alignment
  $\Pi$ given by the algorithm. Specifically, we wish to show that the
  number of edges in $T \setminus \Pi$ is at most $O(t)$ in
  expectation.

 Consider a live subproblem $A$. By Lemma \ref{lemvalidpivots}, each of the $100 \log n$ attempts at  
 selecting a pivot will succeed with probability at least $.8$, and
 with probability greater than $1 - \frac{1}{n^{100}}$, a pivot will
 end up being selected. Since there are $O(n)$ subproblems in the
 entire algorithm, with probability at least $1 -
 O\left(\frac{1}{n^{99}}\right)$, every live subproblem will succeed
 in one of its $100 \log n$ attempts to spawn recursive
 children. Noting that the number of edges in $T \setminus
 \Pi(u, v)$ is at most $O(n)$, it follows that the cases where
 some live subproblem fails to spawn children have negligible impact
 on $\E[|T \setminus \Pi(u, v)|]$. For the rest of the proof,
 we will condition on all live subproblems succeeding at spawning
 children.
 
  Recall that $c(A)$ denotes the cost increase of a live subproblem $A$.  
  For a live subproblem $A$ with children subproblems $B$ and $C$, we also define the
  \emph{induced cost increase} $c'(A)$ to be the induced cost 
  of $B$ plus the induced cost of $C$ minus the induced cost of $A$. Note that 
  $c'(A) \le 2c(A)$, since the increases in induced cost between $A$ and its 
  subproblems $B$ and $C$ can be attributed to the removals of edges. (Recall that 
  the removal of an edge from an edit-matching alignment increases each of the $u$-portion 
  cost and $v$-portion cost by at most one.)

  Let $I$ be the sum of the cost increases over all live subproblems,
  and $I'$ be the sum of the induced cost increases over all live
  subproblems. Note that the induced cost of the root subproblem is
  precisely the cost $t$ of the optimal clean edit-matching alignment
  $T$. Call a dead subproblem \emph{fresh} if its parent subproblem is
  live. Then the sum of the induced costs of all fresh dead
  subproblems is exactly $t + I'$.  By Lemma \ref{lemdeadsubproblems},
  it follows that the sum $S$ of the surplus costs of all fresh dead
  subproblems is at most $O(t + I')$. Since $I' \le 2I$, we have that
  $S \le O(t + I)$.

  Observe that $|T \setminus \Pi(u, v)|$ is precisely $I$, the sum of
  the cost increases over all live subproblems, plus $S$, the sum of
  the surplus costs over all fresh dead subproblems. Thus
  $$|T \setminus \Pi(u, v)| \le O(I + S) \le O(t + I).$$ 
  By Lemma \ref{lemmain}, $\E[I] \le O(t)$, completing the proof.
  
\end{proof}

\subsection{Reduction to The Clean Alignment Problem}\label{subsecreduction}

In this section, we prove the following theorem:
\begin{thm}
    \label{thmreduction}
  Suppose there is an algorithm for the Clean Alignment
  Problem that runs in time $\tilde{O}(n)$. Then there is an
  algorithm for the Pseudorandom Edit Distance Problem which runs in
  time $\tilde{O}(nB)$.
\end{thm}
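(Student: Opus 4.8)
The plan is to exhibit an explicit reduction that, given strings $x$ (which is $(p,B)$-pseudorandom apart from the $M(x)$ malicious letters) and arbitrary $y$, both of length $n$, constructs an instance of the Clean Alignment Problem on ``strings'' of length $O(n/B)$ together with a comparison function $f$ that can be evaluated in time $\tilde{O}(B)$ per query. First I would break $x$ into blocks $x^1,\ldots,x^{n/6B}$ of size $6B$ and $y$ into blocks $y^1,\ldots,y^{n/3B}$ of size $3B$, so that the Clean Alignment instance has $|u| = n/6B$ letters on the $x$-side and $|v| = n/3B$ letters on the $y$-side. The comparison function is defined roughly as in the technical overview: $f(i,j) = 1$ iff there is a $6B$-letter window $a$ of $y$ containing $y^j$ with $\ed(x^i, a) \le O(B/p)$, \emph{and} no such window exists that also contains $y^{j-1}$ (this second clause is what makes edit-matching behave like a ``first occurrence'' relation and keeps the $y$-side from being cluttered). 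Each evaluation of $f(i,j)$ amounts to a bounded number of edit-distance computations on strings of length $O(B)$, hence runs in time $\tilde{O}(B)$ (or $O(B^2)$, which is still within the $\tilde{O}(nB)$ budget once multiplied by the $\tilde{O}(n/B)$ queries the Clean Alignment algorithm makes). Running the assumed $\tilde{O}(|u|+|v|) = \tilde{O}(n/B)$-time algorithm on this instance and then ``inflating'' its output alignment back into an explicit edit sequence between $x$ and $y$ gives the $\tilde{O}(nB)$ bound.

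The substance of the proof is the two-sided cost comparison. First (soundness) I would show that any edit-matching alignment $\mathcal{A}$ of cost $c$ in the constructed instance can be converted into an actual edit sequence from $x$ to $y$ of length $O(B\cdot c)$: each matched pair $(x^i,y^j)$ is realized by $\le O(B/p)$ edits inside the corresponding windows (using the definition of $f$), each unmatched $x$-block contributes $O(B)$ edits, and the asymmetric $y$-side cost exactly accounts for the runs of $\ge 7$ consecutive unmatched $y$-blocks that cannot be ``absorbed'' by a neighboring matched block; since each matched $x^i$ covers a window of $6$ $y$-blocks, a maximal gap of unmatched $y$-blocks of length $g$ needs only $O(g - 6)^+$ deletions charged to it, which is within a constant of the $y$-portion cost. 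Second (completeness) I would take an optimal alignment between $x$ and $y$ and build from it a \emph{clean} edit-matching alignment in the instance of cost $O\left(\frac{1}{p}\ed(x,y)+M(x)\right)$: blocks $x^i$ that are $p$-unique and that the optimal alignment maps cleanly onto a single $y$-window get matched to the appropriate $y^j$, and one checks via the $(p,B)$-pseudorandomness (the $p$-uniqueness of $x^i$) that such an $x^i$ edit matches with \emph{exactly one} $y^j$, so the alignment is genuinely clean; the letters not covered this way are either charged to the $O(\frac{1}{p}\ed(x,y))$ ``dirty'' letters identified in the overview or to the $M(x)$ non-$p$-unique letters. Composing: the Clean Alignment algorithm returns an edit-matching alignment of expected cost $O\!\left(\frac1p\ed(x,y)+M(x)\right)$ (by Corollary~\ref{cormatching} applied to the optimal clean cost just bounded), and inflating it yields $\E[t] \le O\!\left(\frac1p\ed(x,y)+M(x)\right)$, which is exactly what PEDP requires; the $B$ factor from inflation is absorbed into the $\tilde{O}(nB)$ runtime, not into $t$.

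I expect the main obstacle to be the completeness direction — specifically, pinning down the claim that a $p$-unique block $x^i$ which the optimal alignment places ``cleanly'' has a \emph{unique} $y^j$ with $f(i,j)=1$ (cleanliness of the constructed alignment), and simultaneously controlling the total number of letters in blocks that fail this condition by $O(\frac1p\ed(x,y)) + M(x)$. The uniqueness argument must use $p$-uniqueness to rule out two disjoint $y$-windows both within edit distance $O(B/p)$ of $x^i$: if windows $a$ and $a'$ (disjoint, or nearly so) both satisfied $\ed(x^i,a),\ed(x^i,a') \le O(B/p)$, then by the triangle inequality $\ed(a,a') \le O(B/p)$, and since $a$ and $a'$ each contain a $B$-letter substring coming from (almost) the same part of $x^i$, one derives that two disjoint $B$-letter substrings of $x$ are within edit distance $pB$ — contradicting $p$-uniqueness — provided the constant hidden in $O(B/p)$ is chosen small enough relative to the $p$ in the pseudorandomness definition; balancing these constants, and being careful that ``disjoint in $y$'' translates to ``disjoint in $x$'' only up to $O(B)$ slop, is the delicate part. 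The bookkeeping for the two subtleties flagged in the overview (the length mismatch, handled by the asymmetric $y$-cost, and hiding $M(x)$ from the Clean Alignment instance) is routine once the right definition of $f$ and the right charging scheme are fixed.
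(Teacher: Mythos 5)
Your high-level architecture matches the paper's: blocks of size $6B$ in $x$ and $3B$ in $y$, a comparison function based on windowed edit distance with a ``does not also match $y^{j-1}$'' clause, completeness by charging failures to dirty and non-$p$-unique blocks, and soundness by inflating the returned alignment. But there are two genuine gaps. First, your threshold $\ed(x^i,a)\le O(B/p)$ has the $p$-dependence inverted: the paper uses $\ed(x^i,a)\le pB/8$. This is not a matter of tuning the hidden constant, as you suggest. Your own uniqueness argument needs the triangle inequality to produce two disjoint $B$-letter substrings of $x$ at edit distance below $pB$; with threshold $cB/p$ you only get $\ed(a,a')\le 2cB/p \gg pB$, so $(p,B)$-pseudorandomness is never contradicted and cleanliness fails. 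With threshold $\Theta(pB)$ the contradiction goes through, and the $1/p$ in the approximation ratio then arises because only an $O(\frac{1}{pB}\ed(x,y))$ fraction of blocks can exceed $pB/100$ edits under the optimal alignment and hence fail to edit match.

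Second, your soundness step --- realizing each matched pair $(x^i,y^j)$ by the $O(\text{threshold})$ edits witnessing the match --- does not give the claimed bound. Summed over up to $n/6B$ matched pairs this contributes $\Theta(\text{threshold}\cdot n/B)$ edits, i.e.\ an additive $\Theta(pn)$ (or $\Theta(n/p)$ with your threshold), which is not $O(\frac1p\ed(x,y)+M)$ when $\ed(x,y)$ is small; it also leaves unresolved how overlapping or gapped witness windows are stitched together. The paper's Lemma~\ref{lemfinalrecovery} avoids this: it forms the band of candidate letter-edges within $O(B)$ positions of each matched block, shows the \emph{true} optimal alignment $\mathcal{A}$ restricted to this band loses only $O(\frac1p\ed(x,y)+M+t)$ edges, and then computes an optimal alignment restricted to the band in time $O(nB\log n)$, so correctly matched blocks contribute only their genuine edit cost $c_i$ rather than the threshold. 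A smaller omission: to evaluate $f(i,j)$ in $O(B^2)$ you must discretize window start positions to a grid of spacing $pB/100$ (giving $O(1/p)$ candidates) and test each with the $O(B+(pB)^2)$-time low-distance-regime algorithm; checking all $\Theta(B)$ window positions with full dynamic programming would blow the $\tilde{O}(nB)$ budget.
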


Before proving Theorem \ref{thmreduction}, we combine it with
Corollary \ref{cormatching} in order to complete the proof of Theorem
\ref{thmmain}, establishing an efficient solution to the Pseudorandom
Edit Distance Problem.

\begin{proof}[Proof of Theorem \ref{thmmain}]
  By Theorem \ref{thmreduction}, it suffices to present a solution to
  the Clean Alignment Problem which runs in time
  $\tilde{O}(n)$. Corollary \ref{cormatching} gives such a solution.
\end{proof}

The reduction which we will use to prove Theorem \ref{thmreduction}
can be easily stated: Recall that $x^1, x^2, \ldots, x^{n / 6B}$
breaks $x$ into blocks of length $6B$. Let $y^1, \ldots, y^{n / 3B}$
be $y$ broken into blocks of size $3B$.

Let $r$ be the starting index $r = (j - 1) \cdot 3B + 1$ of the block
$y^{j - 1}$. We say that $x^i$ \emph{partially edit matches} with
$y^j$ if there is some $s = r + t \cdot pB / 100$ with $t \in \{0,
\ldots, 300/p - 1\}$ such that $y_s \cdots y_{s + 6B - 1}$ has edit
distance no greater than $pB / 8$ from $x^i$. That is, $x^i$ partially
edit matches $y^j$ if there is a $6B$-letter substring $a$ of $y$ such
that (1) $a$'s start position appears in $y^{j - 1}$; (2) $a$'s
zero-indexed start position is a multiple of $pB / 100$; and (3)
$\ed(a, x^i) \le pB / 8$. Note that $x^i$ can only partially edit match
with $y^j$ for values of $j \ge 2$.

We say that $x^i$ \emph{(fully) edit matches} with $y^j$ if $x^i$
partially matches with $y^j$ but does not partially match with $y^{j -
  1}$.

In order to prove Theorem \ref{thmreduction}, we will show that in
order to solve the Pseudorandom Edit Distance Problem on $x$ and $y$,
it suffices to solve the Clean Alignment Problem on $(x^1,
\ldots, x^{n/ 6B})$ and $(y^1, \ldots, y^{n / 3B})$, with $f(i, j)$
defined to indicate whether $x^i$ and $y^j$ edit match.

We begin by deriving several important properties of this definition
of edit matching. Recall that a block $x^i$ is $p$-unique if its
$B$-letter substrings are all of edit distance at least $pB$ from the
$B$-letter substrings not intersecting $x^i$.

\begin{lem}
  Suppose $x^i$ and $x^j$ both partially edit match with $y^k$. Then
  neither $x^i$ nor $x^j$ are $p$-unique.
  \label{lemnotpunique}
\end{lem}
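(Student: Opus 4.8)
The plan is to show that if $x^i$ and $x^j$ both partially edit match with $y^k$, then there are $6B$-letter substrings $a$ and $a'$ of $y$, each with starting position inside the block $y^{k-1}$ (which has length $3B$), such that $\ed(a, x^i) \le pB/8$ and $\ed(a', x^j) \le pB/8$. Since both $a$ and $a'$ start within a window of length $3B$, their starting positions differ by less than $3B$; I would use this to line up a common $B$-letter ``core'' that sits inside both $a$ and $a'$ and also inside the corresponding aligned regions of $x^i$ and $x^j$.

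The key calculation is as follows. Because $\ed(a, x^i) \le pB/8$, any length-$\ell$ substring of $a$ is within edit distance $pB/8$ of some substring of $x^i$ (restrict the optimal alignment). Pick a length-$B$ window $w$ of $y$ that lies inside both $a$ and $a'$; such a window exists since $a$ and $a'$ overlap in at least $6B - 3B = 3B \ge B$ letters. Then $w$ is within edit distance $pB/8$ of some $B'$-letter substring $s$ of $x^i$ with $|B' - B| \le pB/8$, and similarly within edit distance $pB/8$ of some substring $s'$ of $x^j$. By the triangle inequality for edit distance, $\ed(s, s') \le pB/4$. Now I would want to extract genuine $B$-letter substrings: trimming $s$ and $s'$ to length exactly $B$ changes each by at most $pB/8$ more, so we get $B$-letter substrings $\hat{s} \subseteq x^i$ and $\hat{s}' \subseteq x^j$ with $\ed(\hat{s}, \hat{s}') \le pB/4 + pB/4 = pB/2 < pB$. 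Since $x^i$ and $x^j$ are distinct blocks (disjoint in $x$), these substrings do not intersect, so the existence of such a close pair witnesses that $x^i$ is not $p$-unique (its $B$-letter substring $\hat{s}$ is within distance $< pB$ of the non-intersecting $B$-letter substring $\hat{s}'$), and symmetrically $x^j$ is not $p$-unique.

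The main obstacle I anticipate is bookkeeping the slack carefully: matching up a window of $y$ with substrings of $x^i$ and $x^j$ produces substrings whose lengths need not be exactly $B$, and chaining two edit-distance bounds plus two length-trimming steps must stay comfortably below $pB$. As long as the constants ($pB/8$ in the definition of partial edit matching) are chosen so that $4 \cdot (pB/8) = pB/2 < pB$, the argument closes with room to spare, so the choice of $pB/8$ in the matching threshold is exactly what makes this go through. One subtlety to address explicitly: I must confirm $a$ and $a'$ truly overlap in at least $B$ letters — both start within $y^{k-1}$ (an interval of length $3B$) and extend $6B$ letters, so their intersection has length at least $6B - 3B = 3B$, which is at least $B$ as needed (in fact we only need $p \le 1$, since then $B \le 3B$). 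I also need $x^i$ and $x^j$ to be genuinely disjoint as substrings of $x$, which holds because $i \ne j$ and the blocks partition $x$; if $i = j$ the statement is vacuous, so I may assume $i \ne j$.
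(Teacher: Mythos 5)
Your proof is correct and follows essentially the same route as the paper's: both arguments pass through a common piece of $y$ shared by the two $6B$-letter windows, pull it back to substrings of $x^i$ and $x^j$ via the two alignments, apply the triangle inequality, and then adjust lengths to exactly $B$ while keeping the total slack below $pB$. The only cosmetic difference is that the paper anchors on the block $y^k$ itself (which both windows contain), whereas you anchor on a $B$-letter window in the overlap; your bookkeeping of the constants ($4\cdot pB/8 = pB/2 < pB$) is sound.
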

\begin{proof}
  Since $x^i$ partially edit matches with $y^k$, there is a substring
  $a$ of $x^i$ which is within $pB/8$ edits of $y^k$ (because $x^i$
  must be within $pB / 8$ edits of a string containing $y^k$)
  . Similarly, there is a substring $b$ of $x^j$ which is within
  $pB/8$ edits of $y^k$. By the triangle inequality, $\ed(a, b) \le
  pB/4$. It follows that the first $B$ letters of $a$ are within
  $pB/2$ edits of the first $B$ letters of $b$. (Indeed, if we
  consider an optimal sequence of edits from $a$ to $b$ restricted to
  the first $B$ letters of $a$, then that sequence of edits must
  result in a string of length no more than $\ed(a, b) \le pB / 4$ away from $B$.)
  Therefore, neither $x^i$ nor $x^j$ are $p$-unique.
\end{proof}

For the rest of the subsection, fix an optimal substitution-free
sequence of edits $\mathcal{A}$ between $x$ and $y$. (That is,
$\mathcal{A}$ consists only of insertions and deletions.) We will
sometimes view $\mathcal{A}$ as a sequence of edits, and other times
consider it as a non-crossing matching between the letters in $x$ and
those in $y$ such that each edge must be between two letters of the
same value; the insertions and deletions correspond with letters which
are left unmatched. As such, we will often refer to $\mathcal{A}$ as
an alignment, rather than a sequence of edits.

Assign each edit in $\mathcal{A}$ to a block $x^i$ in which it occurs
(deciding boundary cases arbitrarily), and define $c_i$ to be the
number of edits attributed to each block $x^i$. Notice, in particular,
that $\sum_i c_i = \ed(x, y)$, that each $x^i$ is transformed by
$\mathcal{A}$ into some substring of $y$ via $c_i$ edits, and that the
concatenation of the resulting substrings of $y$ forms all of
$y$. Similarly, attribute to the blocks of $y$ costs $d_i$ such that
$\sum_i d_i = \ed(x, y)$ and such that each $y^i$ is transformed by
$d_i$ edits into some substring of $x$. We will use these values $c_i$
and $d_i$ to aid us in our analysis for the rest of the section.

We continue by stating another important property of edit matching
blocks. We call $x^j$ \emph{polygamous} if it partially edit matches
with some $y^k$ and $y^l$ for which $k$ and $l$ differ by more than
one. Otherwise, $x^j$ is said to be \emph{monogamous}, even if it
fails to partially edit match with any $y^k$. The next lemma bounds
the number of polygamous $x^j$'s.

\begin{lem}
The number of letters in polygamous $x^i$'s is at most
$O\left(\frac{1}{p}\ed(x, y) + M(x)\right)$.
\label{lempolygamous}
\end{lem}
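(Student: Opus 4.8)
The plan is to prove Lemma~\ref{lempolygamous} by a charging argument against the fixed optimal substitution-free alignment $\mathcal{A}$, using the block costs $c_i$ and $d_j$ defined above (recall $\sum_i c_i=\sum_j d_j=\ed(x,y)$). First I would dispose of the non-$p$-unique blocks: by definition the letters lying in blocks $x^i$ that fail to be $p$-unique total at most $M(x)$, so it suffices to bound the number of blocks that are both polygamous and $p$-unique. Since each block contains $6B$ letters, it is enough to show there are at most $O\!\left(\frac{1}{pB}\ed(x,y)\right)$ such blocks; multiplying by $6B$ then yields the claimed $O(\frac1p\ed(x,y))$ term.

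Fix a polygamous $p$-unique block $x^i$. By definition it partially edit matches some $y^k$ and some $y^l$ with $|k-l|>1$; let $a$ (resp.\ $b$) be a $6B$-letter window of $y$ witnessing the partial match with $y^k$ (resp.\ $y^l$), so $a$ starts inside $y^{k-1}$, $b$ starts inside $y^{l-1}$, and $\ed(a,x^i),\ed(b,x^i)\le pB/8$; hence $\ed(a,b)\le pB/4$ by the triangle inequality (exactly as in the proof of Lemma~\ref{lemnotpunique}). Because $|k-l|>1$, the start positions of $a$ and $b$ differ by more than $3B$. Now trace the windows back through $\mathcal{A}$: let $x[a]=\mathcal{A}^{-1}(a)$ and $x[b]=\mathcal{A}^{-1}(b)$ be the contiguous substrings of $x$ whose letters $\mathcal{A}$ matches into $a$ and $b$ respectively, and let $\mu_a$ (resp.\ $\mu_b$) be the number of $\mathcal{A}$-edits incident to $a$ together with $x[a]$ (resp.\ $b$ together with $x[b]$), so that $\ed(x[a],x^i)\le pB/8+\mu_a$ and $\ed(x[b],x^i)\le pB/8+\mu_b$. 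Crucially, $\mu_a$ is bounded by the sum of $c_{i'}$ over blocks of $x$ whose $\mathcal{A}$-image meets $a$ plus the sum of $d_j$ over blocks of $y$ meeting $a$ (and similarly for $\mu_b$), so any $\Omega(pB)$ worth of such ``local cost'' is a charge we can afford to pay.

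The core of the argument is a dichotomy applied to each witness: either the associated local cost is $\Omega(pB)$ (charge it), or $\mathcal{A}$ maps $x^i$ to a region of $y$ that overlaps that window. Indeed, if $x[a]$ were disjoint from $x^i$ then, after the usual $B$-letter truncation, a $B$-letter substring of $x[a]$ not intersecting $x^i$ would lie within $pB/8+\mu_a$ of a $B$-letter substring of $x^i$; if $\mu_a$ is not already $\Omega(pB)$ this is below $pB$, contradicting $p$-uniqueness. So $x[a]$ meets $x^i$, forcing $\mathcal{A}(x^i)$ to meet $a$; likewise for $b$. Using that $a$ and $b$ start more than $3B$ apart, one of the following then holds for $x^i$: a local cost is $\Omega(pB)$; or $c_i=\Omega(B)=\Omega(pB)$ because $\mathcal{A}(x^i)$, of length at most $6B+c_i$, is too short to reach both windows when the two witness blocks are far apart; or else we are in the boundary regime $|k-l|\in\{2,3,\dots\}$ treated in the next paragraph. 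Finally, the charging bound: by Lemma~\ref{lemnotpunique} no two $p$-unique blocks can partially edit match a common $y$-block, so the witness blocks of distinct polygamous $p$-unique $x^i$'s are all distinct and hence spread out in $y$; consequently each $c_{i'}$ and each $d_j$ is charged by only $O(1)$ polygamous $p$-unique blocks. Summing the $\Omega(pB)$ charges shows that $\Omega(pB)$ times the number of polygamous $p$-unique blocks is $O(\ed(x,y))$, which is what we needed.

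The step I expect to be the main obstacle is exactly this boundary regime, where the two witness blocks are as close as polygamy permits (e.g.\ $|k-l|=2$): then the windows $a$ and $b$ can overlap in as many as $3B$ letters, and $\mathcal{A}(x^i)$ can simultaneously overlap both without $c_i$ being large, so the plain trace-back comparison is not by itself contradictory (and, unhelpfully, $x^i$ itself may be internally periodic, which $p$-uniqueness says nothing about). Here I would instead extract the $\Omega(pB)$ charge from the block $y^{k+1}$ strictly between $y^k$ and $y^l$: either $\mathcal{A}(x^i)$ fails to cover $y^{k+1}$, which pushes one of $x[a],x[b]$ entirely off $x^i$ and revives the $p$-uniqueness contradiction, or a $\Theta(B)$-length run of $y$ around $y^{k+1}$ is aligned by $\mathcal{A}$ incompatibly with how the windows $a$ and $b$ align it, forcing $\mu_a+\mu_b=\Omega(B)$. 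Making this quantitatively tight is where the specific constants in the definition of (partial) edit matching are needed — the $6B$ versus $3B$ block sizes, the $pB/8$ edit-distance threshold, and the $pB/100$ window spacing.
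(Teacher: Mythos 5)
Your overall strategy matches the paper's: charge non-$p$-unique blocks to $M(x)$; charge blocks with $c_i > pB/100$, or whose witness $y$-blocks have $d_k > pB/100$, to $O\left(\frac{1}{pB}\ed(x,y)\right)$ many blocks (using Lemma \ref{lemnotpunique} to see that each $y^k$ can be a witness for at most one $p$-unique block); and then show that a $p$-unique block with all relevant costs below $pB/100$ cannot be polygamous. The first two steps are sound. The gap is exactly where you flag it, and your proposed repair does not close it. The paper resolves the boundary case with one geometric observation your write-up is missing: instead of comparing the full $6B$-letter windows $a$ and $b$ (which, as you note, can overlap by up to $3B-1$ letters even when $l=k+2$), compare the \emph{first third} $\hat y^k$ of $y^k$ with the \emph{last third} $\hat y^l$ of $y^l$. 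These two $B$-letter pieces are separated by at least $7B$ positions of $y$ whenever $l\ge k+2$, so their minimal $\mathcal{A}$-preimages $w_1,w_2$ cannot both intersect the $6B$-letter block $x^i$ unless $c_i\ge B$ --- there is no boundary regime left. Whichever $w_t$ misses $x^i$ is within $pB/100$ edits of its third, which in turn is within $pB/8$ of a substring of $x^i$ (restrict the witness alignment of $a$ or $b$ with $x^i$ to that third); truncating to $B$ letters yields the $p$-uniqueness contradiction.

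Your substitute argument via $y^{k+1}$ has two concrete problems. First, the claim that ``$\mathcal{A}(x^i)$ fails to cover $y^{k+1}$'' pushes one of $x[a],x[b]$ \emph{entirely} off $x^i$ is false at that granularity: $b$ starts inside $y^{l-1}=y^{k+1}$, so even if $\mathcal{A}(x^i)$ stops partway through $y^{k+1}$, part of $x[b]$ can still lie in $x^i$. What you need is a specific $B$-letter sub-window whose whole preimage misses $x^i$, which is precisely what the extremal thirds provide. Second, the other branch (``a $\Theta(B)$-run around $y^{k+1}$ is aligned incompatibly, forcing $\mu_a+\mu_b=\Omega(B)$'') is asserted rather than proved, and it is not clear it is true: if $\mathcal{A}(x^i)$ covers $y^{k+1}$ with $c_i$ small, nothing about $y^{k+1}$ alone generates cost; the contradiction only materializes when you demand that $\mathcal{A}(x^i)$ reach both far ends of $y^k$ and $y^l$ simultaneously, i.e., when you make the paper's choice of sub-windows. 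So as written the proof is incomplete, though inserting the extremal-thirds observation turns it into essentially the paper's argument.
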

\begin{proof}
  Since only $M(x)$ letters in $x$ reside in non-$p$-unique blocks, it
  suffices to show that the number of letters in polygamous $p$-unique
  $x^i$'s is at most $O(\frac{1}{p}\ed(x, y))$.

  Since $\sum_i c_i = \sum_i d_i = \ed(x, y)$, at most $O(\frac{1}{pB}\ed(x, y))$ of the $c_i$'s and $d_i$'s can be
  greater than $pB / 100$. By Lemma \ref{lemnotpunique}, each $y^k$
  can be partially edit matched with at most one $p$-unique
  $x^j$. Hence the number of letters in $p$-unique $x^j$'s which
  either satisfy $c_j > pB / 100$ or partially match with some $y^k$
  for which $d_k > pB / 100$, is at most $O(\frac{1}{pB} \ed(x, y))$.

  It therefore suffices to bound the number of $p$-unique $x^j$'s satisfying
  the following property: $x^j$ is polygamous with some $y^k$ and
  $y^l$ such that $c_j, d_k, d_l \le pB / 100$. We will show that no
  such $x^j$'s exist, completing the proof.

  Suppose for contradiction that such an $x^j$ exists, and assume
  without loss of generality that $k < l$. Let $\hat{y}^k$ be the
  first third of the block $y^k$, and $\hat{y}^l$ be the final third
  of the block $y^l$. There must be a substring $w_1$ of $x$ which is
  aligned by $\mathcal{A}$ to $\hat{y}^k$ in at most $pB / 100$
  edits. Similarly there must be a substring $w_2$ of $x$ which is
  aligned by $\mathcal{A}$ to $\hat{y}^l$ in at most $pB/100$
  edits. Define $w_1$ and $w_2$ to be minimal such substrings (meaning
  the first and final letters of each of $w_1$ and $w_2$,
  respectively, must be incident to edges in $\mathcal{A}$ mapping to
  letters in $\hat{y}^k$ and $\hat{y}^l$, respectively). Because
  $\hat{y}^k$ and $\hat{y}^l$ are separated by $7B$ letters, $w_1$ and
  $w_2$ cannot both intersect $x^i$, since this would cause $c_i$ to
  be at least $B$. Let $t \in \{1, 2\}$ be such that $w_t$ does not
  intersect $x^i$. Because $x^i$ partially edit matches with $y^k$ and
  $y^l$, it must be that $\hat{y}^k$ and $\hat{y}^l$ are each within
  $pB/8$ edits of some substrings $b_1$ and $b_2$ of $x^i$. (Recall,
  in particular, that if $x^i$ partially edit matches to a block, then
  $x^i$ must be within $pB/8$ edits of a string containing that
  block.) Thus $w_t$ must each be within edit distance $pB / 8 + pB /
  100 \le pB / 4$ of a substring $b \in \{b_1, b_2\}$ of $x^i$. Since
  $\min(\ed(w_t, \hat{y}^k), \ed(w_t, \hat{y}^l)) \le pB / 100$ and
  $\min(\ed(b, \hat{y}^k), \ed(b, \hat{y}^l)) \le pB / 8$, it must be
  that $w_t$ is of length within $pB / 100$ of $B$ and $b$ is of
  length within $pB/8$ of $B$. It follows that there is a $B$-letter
  substring of $x^i$ within edit distance $pB / 100 + pB / 8 +
  pB / 4 < pB$ of a $B$-letter substring not intersecting $x^i$. This
  prevents $x^i$ from being $p$-unique, a contradiction.
\end{proof}

Armed with the preceding lemmas, the next lemma presents the basic mechanism through which a solution
to the Clean Alignment Problem can be transformed into a
solution to the Pseudorandom Edit Distance Problem. (Note that the
lemma does not use the fact that the output of the Clean
Alignment Problem is an alignment rather than a matching; in fact it
turns out the non-crossing property of the output is not necessary for
our reduction.)

\begin{lem}
Let $E$ be a set of disjoint edges between $\{x^i\}$ and $\{y^i\}$
such that for all $e = (x^i, y^j) \in E$, $x^i$ edit matches with
$y^j$. Let $t$ be the number of letters in $x$ which reside in a block
unmatched by $E$. Then in time $O(Bn \log n)$, one can recover from $E$ a
sequence of $O\left(\frac{1}{p}\ed(x, y) + M + t\right)$ edits from
$x$ to $y$.
\label{lemfinalrecovery}
\end{lem}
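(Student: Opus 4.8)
The goal is to take the abstract matching $E$ between blocks $\{x^i\}$ and $\{y^j\}$ and turn it into an honest sequence of edits from $x$ to $y$. The natural approach is to process the blocks of $x$ from left to right and construct the transformed string block by block, paying a charge for each block depending on whether it is matched by $E$ or not.

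First I would handle the matched blocks. If $(x^i, y^j) \in E$, then $x^i$ edit matches with $y^j$, which by definition means $x^i$ partially edit matches with $y^j$: there is a $6B$-letter substring $a$ of $y$, whose start position lies in $y^{j-1}$, with $\ed(x^i, a) \le pB/8$. I would use the dynamic-programming (or Landau–Vishkin-style bounded) edit-distance computation to find, in time $\tilde O(B^2) = \tilde O(B \cdot B)$ per block — or more carefully $\tilde O(B \cdot pB)$ using a banded computation — an explicit sequence of at most $pB/8 \le O(\tfrac1p B)\cdot p = O(B)$... actually at most $pB/8$ edits transforming $x^i$ into that substring $a$ of $y$. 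Summing the $\tilde O(B^2)$ cost over the $n/(6B)$ blocks gives $\tilde O(nB)$, consistent with the claimed $O(Bn\log n)$ (one should be a little careful: the paper wants $O(Bn\log n)$, so I would use an $O(B^2)$-per-block exact DP, giving $O(nB)$, plus log factors from finding the right window/candidate start position — there are only $O(1/p)$ candidate windows per block, and $1/p \le B$, so this is absorbed). The key point to verify is that, because the windows $a$ for consecutive matched blocks are \emph{nested/ordered} consistently with $E$ being non-crossing — or rather, since $E$'s edges are disjoint and respect the linear order of blocks (and even if $E$ is merely a matching, not an alignment, one can reorder), the chosen substrings of $y$ appear in increasing order and can be concatenated. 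Here I need the observation flagged in the lemma statement's parenthetical: non-crossingness of $E$ is not actually needed — I just need disjointness, and then I sort the edges by $y$-index, and argue the resulting substrings are consistent.

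Next, the unmatched blocks and the leftover regions of $y$. The string produced so far by processing the matched blocks is a concatenation of substrings of $y$, one per matched block; between (and around) these lie the portions of $y$ that are \emph{not} covered by any window $a$. These uncovered portions, together with the content coming from the unmatched blocks of $x$, must be reconciled by additional edits. For each unmatched block $x^i$ I can simply delete all $6B$ of its letters and insert the appropriate stretch of $y$; this costs $O(B)$ per unmatched block and $O(B)$ times the length (in blocks) of the uncovered stretch of $y$. The total over all unmatched $x$-blocks is $O(B) \cdot \frac{t}{6B} = O(t)$, since $t$ counts letters of $x$ in unmatched blocks. For the uncovered stretches of $y$: a stretch of $y$ lying between two windows $a_{i_1}, a_{i_2}$ for consecutive matched blocks $x^{i_1}, x^{i_2}$ must be "absorbed" — but its length is controlled because $\mathcal A$, the optimal substitution-free alignment fixed earlier, either matches those $y$-letters to $x$-letters (and the discrepancy between where $\mathcal A$ puts things and where $E$/the windows put things is charged via the $c_i, d_i$ machinery) or counts them as insertions. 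So I would bound the total length of uncovered $y$-regions by $O(\tfrac1p \ed(x,y) + M)$ plus the part already charged to unmatched blocks.

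**Main obstacle.** The hard part is the accounting for the uncovered regions of $y$ — i.e., showing that the "slack" between the fixed optimal alignment $\mathcal A$ and the block-window structure of $E$ contributes only $O(\tfrac1p \ed(x,y) + M)$. This is where I would lean on Lemmas \ref{lemnotpunique}, \ref{lempolygamous}, and the $c_i, d_i$ decomposition: for a matched pair $(x^i, y^j)$ with $x^i$ $p$-unique, $c_i$ and $d_{j}, d_{j-1}$ small, the window $a$ selected for $x^i$ is forced to sit essentially where $\mathcal A$ maps $x^i$ (by the uniqueness argument in Lemma \ref{lemnotpunique}: there can't be two very-different places in $y$ that $x^i$ nearly matches), so consecutive windows tile $y$ with only $O(c_i + d_j)$-type gaps; summing these gaps gives $O(\tfrac1p\ed(x,y))$, and the non-$p$-unique blocks contribute at most $M$ more. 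I would organize this as: (i) classify $y$-blocks as covered/uncovered by the windows; (ii) for each maximal uncovered run of $y$-blocks, exhibit a "witness" — either a nearby unmatched $x$-block (charge to $t$), or a large $c_i$ or $d_j$ (charge to $\tfrac1p\ed(x,y)$ since only $O(\tfrac1{pB}\ed(x,y))$ indices have $c_i$ or $d_i > pB/100$), or a non-$p$-unique or polygamous block (charge to $M$ via Lemma \ref{lempolygamous}) — and (iii) conclude the total length of uncovered $y$ is within a constant factor of $O(\tfrac1p\ed(x,y) + M + t)$. Combining the $O(pB/8)$-per-matched-block cost, the $O(t)$ unmatched-block cost, and this uncovered-region cost yields the claimed $O(\tfrac1p\ed(x,y)+M+t)$ total, and the construction clearly runs in $O(Bn\log n)$ time since each of the $\Theta(n/B)$ blocks costs $\tilde O(B^2)$.
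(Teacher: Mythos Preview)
Your block-by-block reconstruction is a different route from the paper's, and as written it has a genuine gap. The $6B$-letter windows you select for consecutive matched blocks need not be disjoint: if $(x^i, y^j)$ and $(x^{i+1}, y^{j+1})$ are both in $E$, the two windows (each of length $6B$, each starting somewhere inside a $3B$-letter block of $y$) typically overlap by roughly $3B$ letters, so ``concatenate the windows and fill the gaps'' does not produce $y$. Your treatment of the case where $E$ has crossings is also not right: sorting edges by $y$-index puts the $x$-blocks out of order, so you can no longer process $x$ left to right and output a legal edit sequence. The charging scheme you sketch for uncovered $y$-regions is in the right spirit (it uses monogamy and the $c_i,d_i$ machinery in the same way the paper does), but it does not survive the overlap problem.

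The paper avoids all of this by \emph{not} reconstructing $y$ from the windows. Instead it drops down to the letter level and defines a set $T$ of candidate edges: $(x_r, y_s)\in T$ whenever $x_r$ lies in a block $x^i$ that $E$ matches to some $y^j$ with $y_s$ within fewer than $9B$ positions of $y^j$. The argument you partially carried out (monogamous $x^i$ with $c_i\le pB/100$ must edit-match a $y^{j'}$ close to where $\mathcal{A}$ sends $x^i$) shows that $\mathcal{A}\cap T$ already leaves at most $O(\frac{1}{p}\ed(x,y)+M+t)$ letters of $x$ unmatched. Since $|T|=O(nB)$, one then invokes Lemma~3.1 of \cite{smoothcomplexity} (a Hunt--Szymanski-style routine that finds the optimal alignment restricted to a given edge set in time $O(|T|\log n)$) to recover the desired edit sequence in $O(nB\log n)$ time. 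This sidesteps both the window-overlap issue and the crossing issue simultaneously, since $T$ is just an edge set and the external routine returns an honest non-crossing alignment.
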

\begin{proof}
Because $\sum_i c_i = \ed(x, y)$, all but $O\left(\frac{1}{p} \ed(x, y) \right)$ of the letters in $x$ are
contained in $x^i$'s satisfying $c_i \le pB / 100$. Moreover, by Lemma
\ref{lempolygamous}, at most $O\left(\frac{1}{p}\ed(x, y) + M\right)$
letters are contained in polygamous $x^i$'s. 

Now consider a monogamous $x^i$ for which $c_i \le pB / 100$. We claim
that $x^i$ must partially edit match to a $y^j$ which is within fewer
than $6B$ positions of where $x^i$ is mapped to by $\mathcal{A}$
(meaning each letter in $x^i$ that is matched by $\mathcal{A}$ is
matched to a letter within fewer than $6B$ positions of $y^j$). In
particular, if $\mathcal{A}$ transforms $x^i$ into a substring of $y$
beginning at some $y_k$, then because $c_i \le pB / 100$, it must be
that $\ed(x^i, y_k \cdots y_{k + 6B - 1}) \le pB / 50$. If we define
$k'$ to be $k$ rounded down to the nearest position whose zero-indexed
position is a multiple of $pB / 100$, then it follows that $\ed(x^i,
y_{k'} \cdots y_{k' + 6B - 1}) \le pB / 25$. If we define $k''$ to be
$k'$ rounded up to the next multiple of $3B$, then $x^i$ partially
edit matches with $y^{k'' / 3B}$. By construction of $y^{k'' / 3}$,
each letter in $x^i$ that is matched by $\mathcal{A}$ is matched to a
letter within fewer than $6B$ positions of $y^{k'' / 3B}$.

Define $j$ to be $k'' / 3B$. Since $x^i$ partially edit matches with
$y^j$ and $x^i$ is monogamous, the unique $y^{j'}$ to which $x^i$ edit
matches must satisfy $j' \in \{j - 1, j\}$. This, in turn, means that
every letter in $x^i$ which is matched by $\mathcal{A}$ to a letter in
$y$ is matched to a letter in $y$ within fewer than $9B$ positions of
$y^{j'}$.

Recall that all but at most $O\left(\frac{1}{p}\ed(x, y) + M\right)$
of the letters in $x$ reside in blocks $x^i$ which are monogamous and
satisfy $c_i \le pB / 100$. Moreover, of these letters, all but at
most $t$ of them reside in a block $x^i$ which is edit matched by
$E$. Hence, with the exception of $x^i$'s containing a total of at
most $O(\frac{1}{p}\ed(x, y) + M + t)$ letters, every $x^i$ is edit
matched by $E$ to a single $y^j$, and any edges in the alignment
$\mathcal{A}$ between a letter in $x^i$ and a letter in $y$ have to
concern a letter in $y$ within fewer than $9B$ positions of $y^j$.

Consider the set $T$ of edges between letters in $x$ and $y$, where an
edge $(x_r, y_s)$ is included if $x_r$ is in some $x^i$ which is
matched by $E$ to some $y^j$ within fewer than $9B$ positions of
$y_s$. Note that $T$ contains at most $O(nB)$ edges. We have shown
that the intersection between $T$ and the edges in the alignment
$\mathcal{A}$ yields an alignment which matches all but at most
$O(\ed(x, y) + M + t)$ letters of $x$ (and thus also of $y$ since $|y|
= |x|$). Applying Lemma 3.1 of \cite{smoothcomplexity}, which finds
the optimal alignment restricted to a set $T$ of edges in time $O(|T|
\log n)$, it follows that we can recover a sequence of $O(\ed(x, y) +
M + t)$ edits from $x$ to $y$ in time $O(Bn \log n)$.
\end{proof}

To complete the proof of Theorem \ref{thmreduction} using Lemma
\ref{lemfinalrecovery}, it suffices to show that any optimal clean
edit-matching alignment between $(x^1, \ldots, x^{n/ 6B})$ and $(y^1,
\ldots, y^{n / 3B})$ leaves no more than $O(\ed(x, y) + M)$ letters
unmatched.

We begin by analyzing the cost of an optimal (possibly crossing)
edit matching. An \emph{edit matching} between $(x^1, \ldots, x^{n/
  6B})$ and $(y^1, \ldots, y^{n / 3B})$ is a matching whose edges are
between edit-matched $x^i$'s and $y^j$'s. The \emph{cost} of an edit
matching is defined in the same asymmetric manner as the cost of an edit-matching
alignment.

\begin{lem}
  Let $T$ be a minimum-cost edit matching between $(x^1, \ldots, x^{n/
    6B})$ and $(y^1, \ldots, y^{n / 3B})$. Then the cost of $T$ is at
  most $$O\left( \frac{\frac{1}{p}\ed(x, y) + M}{B}\right).$$
  \label{lemmincostmatching}
\end{lem}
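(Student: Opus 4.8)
The plan is to exhibit an explicit low-cost edit matching $T$ and bound its cost, using the optimal substitution-free alignment $\mathcal{A}$ between $x$ and $y$ together with the costs $c_i$ and $d_i$ assigned to the blocks. First I would identify the ``good'' blocks $x^i$: those that are monogamous and satisfy $c_i \le pB/100$. By $\sum_i c_i = \ed(x,y)$, all but $O\big(\frac{1}{pB}\ed(x,y)\big)$ of the blocks $x^i$ satisfy $c_i \le pB/100$, and by Lemma \ref{lempolygamous} all but $O\big(\frac{\frac{1}{p}\ed(x,y)+M}{B}\big)$ of the blocks are monogamous; so all but $O\big(\frac{\frac{1}{p}\ed(x,y)+M}{B}\big)$ of the blocks $x^i$ are good. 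Exactly as in the first two paragraphs of the proof of Lemma \ref{lemfinalrecovery}, each good block $x^i$ partially edit matches some $y^j$ within fewer than $6B$ positions of where $\mathcal{A}$ maps $x^i$, and hence (since $x^i$ is monogamous) edit matches a unique $y^{j'}$ with $j'\in\{j-1,j\}$; I would take $T$ to contain the edge $(x^i,y^{j'})$ for every good block $x^i$.

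Next I would argue $T$ (after deleting a few edges to make it a matching, i.e.\ discarding collisions on the $y$-side) still has the claimed cost, by bounding its $x$-portion and $y$-portion costs separately. The $x$-portion cost is the number of letters of $x$ in blocks not matched by $T$; since the only unmatched $x^i$'s are the $O\big(\frac{\frac{1}{p}\ed(x,y)+M}{B}\big)$ bad blocks together with any good block whose partner $y^{j'}$ was claimed by another good block, and each $y^{j'}$ is claimed by at most one $p$-unique $x^i$ by Lemma \ref{lemnotpunique} (and non-$p$-unique blocks contribute at most $M/(6B)$ blocks), the number of unmatched blocks is $O\big(\frac{\frac{1}{p}\ed(x,y)+M}{B}\big)$, so the $x$-portion cost is $O\big(\frac{1}{p}\ed(x,y)+M\big)$, which divided by $B$ is within the stated bound. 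For the $y$-portion cost I would show that a block $y^k$ can be left with $y^k,\dots,y^{k+6}$ all unmatched only if the relevant stretch of $y$ either is the image under $\mathcal{A}$ of a bad $x$-block or absorbs $\Omega(B)$ edits of $\mathcal{A}$; roughly, consecutive good blocks $x^i, x^{i+1}, \ldots$ map under $\mathcal{A}$ to consecutive stretches of $y$ that together cover all but the ``bad'' portion of $y$, and their matched partners $y^{j'}$ track these stretches to within $O(B)$ positions, so a run of seven consecutive unmatched $y$-blocks forces a gap attributable to bad $x$-blocks or to $\Omega(B)$ of the $c_i$'s — there can be only $O\big(\frac{\frac{1}{p}\ed(x,y)+M}{B}\big)$ such runs.

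The main obstacle I expect is the $y$-portion bookkeeping: because $x$ is broken into blocks of size $6B$ and $y$ into blocks of size $3B$, every edit matching already leaves at least half the $y$-blocks unmatched ``for free,'' and the asymmetric cost model (a $y$-block pays only if it and its six neighbors are all unmatched) is exactly what rescues this. I would need to check carefully that the $y$-blocks matched by $T$ are spread out densely enough — at least one in every $O(1)$ consecutive $y$-blocks, outside the $O\big(\frac{\frac{1}{p}\ed(x,y)+M}{B}\big)$ ``charged'' regions — which follows because a good $x^i$ of length $6B$ maps to roughly $6B$ letters of $y$, i.e.\ two $y$-blocks' worth, so consecutive good $x$-blocks yield matched $y$-blocks no more than a bounded number of $y$-blocks apart. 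Translating ``within fewer than $6B$ positions of where $\mathcal{A}$ maps $x^i$'' into a statement about which $y$-block index $j'$ gets used, and controlling how the index $j'$ advances as $i$ advances through a run of good blocks, is the delicate part; everything else is a direct consequence of $\sum_i c_i=\sum_i d_i=\ed(x,y)$ together with Lemmas \ref{lemnotpunique} and \ref{lempolygamous}.
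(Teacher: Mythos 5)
Your construction of $T$ and your treatment of the $x$-portion cost are exactly the paper's: take the (monogamous, $c_i \le pB/100$) blocks, match each to the $y^{j'}$ it edit matches near its image under $\mathcal{A}$, and discard collisions using Lemma \ref{lemnotpunique}, leaving $O\bigl(\frac{\frac{1}{p}\ed(x,y)+M}{B}\bigr)$ unmatched $x$-blocks. Where you diverge is the $y$-portion cost, which you correctly flag as the delicate part and propose to handle by hand-tracking how the matched indices $j'$ advance along runs of good blocks. The paper avoids this bookkeeping entirely with a shortcut you missed: apply Lemma \ref{lemfinalrecovery} to the matching $T$ itself. That lemma already yields a sequence of $O\bigl(\frac{1}{p}\ed(x,y)+M+t\bigr)$ edits (with $t$ controlled by your $x$-portion bound), and by its construction this edit sequence deletes every letter of $y$ not within $9B$ positions of a matched $y$-block; hence the middle block of any seven consecutive unmatched $y$-blocks contributes $3B$ deletions, so the number of such runs is at most the total edit count divided by $3B$. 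Your direct argument can be made to work — the gap of $21B$ letters must lie in the image of bad or collision-discarded blocks, whose total length is $O\bigl(\frac{1}{p}\ed(x,y)+M\bigr)$ — but it requires carefully accounting for how long the image of a block with large $c_i$ can be, precisely the bookkeeping the black-box invocation of Lemma \ref{lemfinalrecovery} sidesteps. Both routes reach the same bound; the paper's buys brevity by reusing machinery you already had in hand.
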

\begin{proof}
  Recall that any block $x^i$ with $c_i \le pB / 100$ partially edit
  matches to some $y^j$, and thus also edit matches to some
  $y^j$. (See the proof of Lemma \ref{lemfinalrecovery}.) It follows
  that all but $O\left( \frac{1}{pB} \ed(x, y) \right)$ of the blocks
  $x^i$ edit match with some $y^j$.

  By Lemma \ref{lemnotpunique}, all but $O\left(\frac{1}{pB} \ed(x, y)
  + M / B\right)$ of the $x^i$'s edit match with some $y^j$ that
  doesn't edit match with any other $x^{i'}$'s. It follows that there
  is an edit matching between $(x^1, \ldots, x^{n/ 6B})$ and $(y^1,
  \ldots, y^{n / 3B})$ in which all but $O\left(\frac{1}{pB} \ed(x, y)
  + M / B\right)$ of the $x^i$'s are matched.

  The $x$-portion cost of such an edit matching $T$ is at most
  $O\left(\frac{1}{pB} \ed(x, y) + M / B\right)$.  Lemma
  \ref{lemfinalrecovery} can be applied to $T$ to recover a sequence
  of $O\left(\frac{1}{p}\ed(x, y) + M\right)$ edits from $x$ to
  $y$. Recall that the sequence of edits recovered by Lemma
  \ref{lemfinalrecovery} removes any letters in $y$ which are not
  within fewer than $9B$ positions of some $y^i$ that is matched by
  $T$. Hence, given any seven adjacent $y^i$'s left unmatched by $T$ (corresponding with $21B$ unmatched letters),
  the middle $y^i$ of the seven must incur $3B$ edits in the process described by Lemma \ref{lemfinalrecovery}. Since only
  $O\left(\frac{1}{p}\ed(x, y) + M\right)$ edits can be incurred in
  total, the $y$-portion cost of $T$ cannot exceed
  $$O\left(\frac{\frac{1}{p}\ed(x, y) + M}{B}\right).$$
  This completes the proof of the lemma.
\end{proof}

The preceding lemma bounds the cost of a minimum-cost edit
matching. The next lemma will allow us to bound the cost of a
minimum-cost edit-matching alignment. In particular, the lemma shows
that in any edit matching, the crossings can be eliminated by removing
only edges that are incident to an $x^i$ with one of three properties:
$x^i$ is either polygamous, non-$p$-unique, or satisfies $c_i \ge pB /
100$. Since one can bound the number of such $x^i$'s, it follows that
we can transform any edit matching into an edit-matching alignment by
removing only a small number of edges.

\begin{lem}
  Suppose that for $i < j$ and $k \le l$, $x^i$ edit matches with
  $x^l$ and $x^j$ edit matches with $x^k$. Then either $\max(c_i ,c_j)
  \ge pB / 100$, or at least one of $x^i$ or
  $x^j$ are either polygamous or non-$p$-unique.
  \label{lemremovecrossings}
\end{lem}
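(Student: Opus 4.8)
The plan is to argue contrapositively: assume $\max(c_i, c_j) < pB/100$ and that both $x^i$ and $x^j$ are monogamous and $p$-unique, and derive a contradiction from the crossing configuration. The key geometric fact driving the contradiction is that $x^i$ and $x^j$ are disjoint blocks in $x$ (with $i < j$), so if both of them partially edit match to blocks in $y$ that are very close together, we will find two disjoint $B$-letter substrings of $x$ at small edit distance, contradicting $p$-uniqueness; on the other hand if the matched $y$-blocks are far apart, then since $x^i$ edit matches $y^l$ and $x^j$ edit matches $y^k$ with $j$ to the right of $i$ but $k$ to the left of $l$ (the crossing), the alignment $\mathcal{A}$ is forced to move a large chunk of $y$ across the gap between $x^i$ and $x^j$, which would force one of $c_i, c_j$ to be large — again a contradiction.

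First I would make the crossing precise: we have $i < j$ but $k \le l$ with $x^i$ edit matching $y^l$ and $x^j$ edit matching $y^k$; combined with $i < j$ this means $k \le l$ while the $x$-indices are reversed, so the edges $(x^i, y^l)$ and $(x^j, y^k)$ cross (or touch, when $k = l$). Next I would use the argument already developed in the proof of Lemma~\ref{lemfinalrecovery}: since $c_i < pB/100$ and $c_j < pB/100$, both $x^i$ and $x^j$ are mapped by $\mathcal{A}$ to substrings of $y$ at small edit distance ($\le pB/50$) from $6B$-letter windows of $y$, and because they are monogamous, the block they edit match is within $9B$ positions of where $\mathcal{A}$ places them. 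So $\mathcal{A}$ places $x^i$ essentially near $y^l$ and $x^j$ essentially near $y^k$. Since $x^i$ precedes $x^j$ in $x$ and $\mathcal{A}$ is non-crossing, $\mathcal{A}$ must place $x^i$ to the left of (or overlapping) $x^j$'s image in $y$; hence the image of $y^l$-region lies weakly to the left of the image of $y^k$-region in terms of $y$-coordinates, forcing $l \le k + O(1)$ (in block units), i.e. $l$ and $k$ differ by only $O(1)$ blocks.

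Now I would split on this bounded gap. If $k$ and $l$ differ by more than one, then: $x^i$ partially edit matches both $y^k$-ish and $y^l$-ish neighborhoods? No — rather, the bounded-gap conclusion together with $k \le l$ differing by more than one gives a small constant gap, and I would then show $x^i$ itself partially edit matches a block very close to $y^k$ as well (since $\mathcal{A}$'s image of $x^i$ is within $O(B)$ of both), making $x^i$ polygamous — contradiction. If instead $k$ and $l$ differ by at most one, then $y^k$ and $y^l$ are adjacent (or equal), so $x^i$ and $x^j$ each partially edit match a $6B$-window of $y$ within $O(B)$ positions of a common location; by the triangle inequality (exactly as in Lemma~\ref{lemnotpunique}) some $B$-letter substring of $x^i$ and some $B$-letter substring of $x^j$ are within $pB/8 + pB/8 + O(pB/100) < pB$ edits of each other, and since $x^i$ and $x^j$ are disjoint this contradicts the $p$-uniqueness of (say) $x^i$.

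The main obstacle I anticipate is bookkeeping the additive $O(B)$ slacks (the $6B$, $9B$, and rounding-to-multiples-of-$pB/100$ errors) carefully enough that in the "close" case the total edit distance between the two $B$-letter substrings comes out strictly below $pB$, and in the "far" case the displacement is provably at least $B$ so that it forces $c_i$ or $c_j \ge B \gg pB/100$. Getting the constant $1$ in "differ by more than one" to line up with the monogamy definition (which already tolerates a gap of one) is the delicate part; everything else is a routine repackaging of the triangle-inequality argument from Lemmas~\ref{lemnotpunique} and~\ref{lempolygamous} together with the non-crossing property of $\mathcal{A}$.
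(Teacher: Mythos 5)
Your overall strategy is the same as the paper's: derive from $\mathcal{A}$ (as in the proof of Lemma~\ref{lemfinalrecovery}) blocks $y^q$ and $y^r$ partially edit matched by $x^i$ and $x^j$ respectively, use the non-crossing property of $\mathcal{A}$ to get $q \le r$, combine this with monogamy and $k \le l$ to force the relevant blocks of $y$ close together, and finish with a triangle-inequality argument. The gap is in the finish. In your ``close'' case you apply the triangle inequality to the two $6B$-letter witnessing windows for $y^l$ and $y^k$, which you only know are within $O(B)$ positions of one another. A triangle inequality needs a common reference string of length at least $B$ to which substrings of both $x^i$ and $x^j$ are close; two $6B$-letter windows whose starting positions differ by nearly $6B$ (which can happen even when $l = k+1$, since the window witnessing a partial match with $y^l$ may start anywhere in $y^{l-1}$) can overlap in fewer than $B$ letters, and then no such hub exists. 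The paper avoids this by doing the index arithmetic exactly: monogamy together with the definition of (full) edit matching gives $q \in \{l, l+1\}$ and $r \in \{k, k+1\}$, hence $l \le q \le r \le k+1$; combined with $k \le l$ this forces $x^i$ and $x^j$ to partially edit match one and the same block of $y$ (namely $y^q = y^r$ if $q = r$, and $y^k = y^l$ if $q < r$), at which point Lemma~\ref{lemnotpunique} applies verbatim with that common block as the hub.

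Relatedly, your ``differ by more than one'' case is not actually argued: you assert that $x^i$ would partially edit match a block near $y^k$ and hence be polygamous, but partially matching a \emph{nearby} block yields polygamy only if that block is more than one away from $y^l$, and you give no reason why $x^i$ should partially match $y^k$ itself. The clean resolution is that this case is vacuous: the inequality $l \le k+1$ derived above, together with $k \le l$, rules it out. So the single missing ingredient is pinning down a common partially-matched block via the index arithmetic; once you have it, your argument collapses into a direct application of Lemma~\ref{lemnotpunique} and no new triangle-inequality bookkeeping is needed.
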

\begin{proof}
  If both $c_i$ and $c_j$ are smaller than $pB / 100$, then the
  alignment $\mathcal{A}$ tells us how to partially edit match $x^i$
  and $x^j$ each to some $y^q$ and $y^r$ with $q \le r$ (as done in
  the proof of Lemma \ref{lemfinalrecovery}). If $q = r$,then by Lemma
  \ref{lemnotpunique}, neither $x^i$ nor $x^j$ are $p$-unique. If $q <
  r$ and both $x^i$ and $x^j$ are monogamous, then since $k \le l$, it
  follows that $k = l$. But by Lemma \ref{lemnotpunique}, this
  prevents either $x^i$ or $x^j$ from being $p$-unique.

  Therefore, if both $c_i$ and $c_j$ are less than $pB / 100$, then at
  least one of $x^i$ or $x^j$ are either polygamous or non-$p$-unique.
\end{proof}



Finally, we can now bound the cost of an optimal clean edit-matching alignment.
\begin{lem}
Let $T$ be a minimum-cost clean edit-matching alignment between $(x^1,
\ldots, x^{n/ 6B})$ and $(y^1, \ldots, y^{n / 3B})$. Then the cost of
$T$ is at most $$O\left( \frac{\frac{1}{p}\ed(x, y) + M}{B}\right).$$
\label{lemcostcleanalignment}
\end{lem}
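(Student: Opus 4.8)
The plan is to construct an explicit clean edit-matching alignment of the desired cost, starting from the minimum-cost (possibly crossing) edit matching $T_0$ furnished by Lemma \ref{lemmincostmatching}, whose cost is already $O\left(\frac{\frac{1}{p}\ed(x,y)+M}{B}\right)$. The point is that $T_0$ can be turned into a clean \emph{alignment} simply by deleting edges incident to a small set of ``bad'' blocks, and that this surgery costs only an additive term of the same order.

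First I would call a block $x^i$ \emph{bad} if it is non-$p$-unique, or polygamous, or satisfies $c_i \ge pB/100$, and bound the number of bad blocks by $O\left(\frac{\frac{1}{p}\ed(x,y)+M}{B}\right)$: there are at most $M/(6B)$ non-$p$-unique blocks, since $M$ counts the letters lying in such blocks; there are $O\left(\frac{\frac{1}{p}\ed(x,y)+M}{B}\right)$ polygamous blocks by Lemma \ref{lempolygamous}; and, since $\sum_i c_i = \ed(x,y)$, there are at most $O(\ed(x,y)/(pB))$ blocks with $c_i \ge pB/100$.

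Next I would let $T$ be $T_0$ with every edge incident to a bad $x^i$ deleted. Each deletion raises the $x$-portion cost by $1$ and the $y$-portion cost by $O(1)$ (a newly unmatched $y^j$ lies in at most seven windows of seven consecutive $y$-blocks), so the cost of $T$ is still $O\left(\frac{\frac{1}{p}\ed(x,y)+M}{B}\right)$. I would then verify that $T$ is a clean edit-matching alignment. It is non-crossing: a crossing pair of its edges $(x^i,\cdot)$ and $(x^j,\cdot)$ would, by Lemma \ref{lemremovecrossings}, force $\max(c_i, c_j) \ge pB/100$ or make one of $x^i, x^j$ non-$p$-unique or polygamous---that is, make one of them bad---contradicting that all bad-incident edges were removed. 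It is clean: for a remaining edge $(x^i, y^j)$, the block $x^i$ is $p$-unique and monogamous; if some $x^{i'} \ne x^i$ also edit matched $y^j$, then $x^i$ and $x^{i'}$ would both partially edit match $y^j$, so Lemma \ref{lemnotpunique} would contradict $x^i$ being $p$-unique; and if $x^i$ also edit matched some $y^{j'} \ne y^j$, then $x^i$ partially edit matches two distinct $y$-blocks that must differ in index by more than one (a block cannot fully edit match two consecutive $y$-blocks, since fully edit matching $y^{m+1}$ forbids partially edit matching $y^m$), making $x^i$ polygamous---a contradiction.

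Finally, since $T$ is a clean edit-matching alignment of cost $O\left(\frac{\frac{1}{p}\ed(x,y)+M}{B}\right)$, the minimum-cost clean edit-matching alignment has cost at most this. The only part that needs real care---the main obstacle---is the verification that cleanness comes essentially for free once the bad blocks are deleted; this rests on the two observations that sharing a $y$-match between two $x$-blocks certifies non-$p$-uniqueness (Lemma \ref{lemnotpunique}), and that full edit matching confines $x^i$ to a single ``region'' of $y$, so two distinct $y$-matches for $x^i$ are forced to be non-adjacent and hence witness polygamy. Counting the bad blocks and tracking the cost increase of the deletions are routine.
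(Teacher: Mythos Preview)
Your proposal is correct and follows essentially the same route as the paper's proof: start from the minimum-cost edit matching $T_0$ of Lemma~\ref{lemmincostmatching}, delete all edges incident to blocks that are non-$p$-unique, polygamous, or have $c_i \ge pB/100$, bound the number of deletions, and then invoke Lemma~\ref{lemremovecrossings} for non-crossing and Lemma~\ref{lemnotpunique} plus monogamy for cleanness. You are in fact slightly more careful than the paper in one place: you explicitly argue that a block cannot fully edit match two \emph{adjacent} $y$-blocks (since fully matching $y^{m+1}$ forbids partially matching $y^m$), which is what makes ``monogamous $\Rightarrow$ at most one full edit match'' work; the paper asserts this implication without spelling it out.
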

\begin{proof}
  Consider an optimal edit-matching $T_0$. By Lemma
  \ref{lemmincostmatching}, the cost of $T_0$ is no greater than $O\left(
  \frac{\frac{1}{p}\ed(x, y) + M}{B}\right)$.

  To obtain $T$, remove from $T_0$ any edges involving some $x^i$ such
  that either $c_i < pB / 100$, $x^i$ is polygamous, or $x^i$ is
  non-$p$-unique. This removes at most $O\left(\frac{\ed(x,
    y)}{pB}\right)$ $x^i$'s with $c_i < pB / 100$, at most
  $O\left(\frac{1}{pB}\ed(x, y) + M / B\right)$ $x^i$'s which are
  polygamous (Lemma \ref{lempolygamous}), and at most $M/B$ $x^i$'s
  which are not $p$-unique. Since each edge-removal can increase the
  $x$-portion and $y$-portion costs each by at most one, the cost of
  $T$ is at most $O\left( \frac{\frac{1}{p}\ed(x, y) + M}{B}\right)$.

  By Lemma \ref{lemremovecrossings}, $T$ has no crossings. Moreover,
  because $T$ does not match any polygamous $x^i$'s, and because by
  Lemma \ref{lemnotpunique} $T$ does not match any $y^j$'s that edit
  match to more than one $x^i$, $T$ is a clean edit-matching
  alignment.
\end{proof}

Theorem \ref{thmreduction} follows easily from the preceding lemmas.
\begin{proof}[Proof of Theorem \ref{thmreduction}]
  We will assume without loss of generality that $p \ge 1/B$, since
  otherwise we may replace $p$ with $1/B$ without changing the
  definition of $p$-unique blocks.
  
  Suppose there is an algorithm for the Clean Alignment Problem that
  runs in time $\tilde{O}(n)$ and returns a clean edit-matching
  alignment whose cost is within a factor of $O(1)$ of optimal, in
  expectation.

  For a given $i$ and $j$, determining whether $x^i$ edit matches $y^j$ can be done in time
  $O\left(B^2\right)$. In particular, we need only compute the edit distances between $x^i$ and $O(1 / p)$ substrings of $y$. Denote these substrings by $a_1, \ldots, a_{q}$. For each $a_s$, we wish to determine whether $\ed(x^i, a_s) \le pB / 8$. Using the exact algorithm of \cite{landau1998incremental}, which computes $\ed(u, v)$ in time $O(|u| + |v| + \ed(u, v)^2)$, determining whether $\ed(x^i, a_s) \le pB / 8$ can be done in time $O\left( B + \left(pB\right)^2\right) \le O(pB^2)$. Performing $O(1/p)$ such computations in order to determine whether $x^i$ edit matches with $y^j$ therefore takes time $O(B^2)$.
  
 Thus the algorithm for the Clean Alignment Problem can be employed
  to find a clean edit-matching alignment between $(x^1, \ldots, x^{n/ 6B})$ and $(y^1, \ldots, y^{n
    / 3B})$ in time 
    $$\tilde{O}\left( (n / B) B^2\right)
  = \tilde{O}\left(nB \right).$$ The cost $t$ of this alignment will have expected value at most
  $$O\left( \frac{\frac{1}{p}\ed(x, y) + M}{B}\right),$$
  by Lemma \ref{lemcostcleanalignment}.

  By Lemma \ref{lemfinalrecovery}, one can then recover in time $O(Bn
  \log n)$ a sequence of $t'$ edits from $x$ to $y$ satisfying
  $$\E[t'] \le O\left(\frac{1}{p}\ed(x, y) + M\right),$$
  as desired.  
\end{proof}

\begin{remark}
  When $p^2 \ge \frac{1}{B}$, we can use the bound $O(B + (pB^2)) \le
  O(p^2B^2)$ (rather than bounding the quantity by $pB^2$) in order to
  prove a slightly better runtime of $\tilde{O}(p \cdot nB)$.
\end{remark}

\section{Determining parameters $p$ and $B$}\label{secoblivious}

In this section, we consider the situation where the parameters $p$
and $B$ are not known beforehand. Throughout the section, for a string
$x$, and parameters $p$ and $B$, we define $M_{p, B}(x)$ (often
abbreviated by $M_{p, B}$) to be the quantity $M(x)$ defined in
Section \ref{secmain} (in which the parameters $p$ and $B$ were
implicit).

We consider two settings. In the first setting, we are given two
strings $x$ and $y$ and wish to compute $\ed(x, y)$ within a factor of
$O(\alpha)$ as fast as possible. Theorem \ref{thmsingleshot} gives an
algorithm for this problem which runs in time
$\tilde{O}(n^{4/3}B^{2/3})$, where $B$ is the
smallest block size for which the string $x$ is $(\alpha,
B)$-pseudorandom.

\begin{thm}
Let $x$ and $y$ be strings in $\Sigma^n$. Let $\alpha > 1$ be an
approximation threshold, and suppose there is a power-of-two
block-size $B$ for which the string $x$ is $(1/\alpha,
B)$-pseudorandom. Then without knowing $B$ beforehand, one can still
(with high probability) approximate $\ed(x, y)$ within a factor of
$O(\alpha)$ in time $\tilde{O}(n^{4/3}B^{1/3})$.

Moreover, rather than $x$ being $(1/ \alpha, B)$-pseudorandom, it
suffices that $M_{1/\alpha, B}(x) < n^{2/3} \cdot B^{1/3}$.
\label{thmsingleshot}
\end{thm}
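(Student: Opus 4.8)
The plan is to reduce to Theorem~\ref{thmmain} by searching over candidate block sizes and using the exact algorithm of \cite{landau1998incremental} to clean up the case where $\ed(x,y)$ is small.

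The enabling observation is that a time budget of $\tilde{O}(n^{4/3}B^{1/3})$ suffices to run the $O(n+k^2)$-time algorithm of \cite{landau1998incremental} with threshold $k=n^{2/3}B^{1/6}$; so whenever $\ed(x,y)\le n^{2/3}B^{1/6}$ we recover it exactly, and otherwise an additive error of order $n^{2/3}B^{1/3}$ in our estimate is harmless, being absorbed by the $O(\alpha)$ multiplicative slack. (Here is where the hypothesis is used in the form $M_{1/\alpha,B}(x)<n^{2/3}B^{1/3}$ rather than as genuine $(1/\alpha,B)$-pseudorandomness.) Consequently, if $B$ were known, one would simply run Theorem~\ref{thmmain} with $p=1/\alpha$ and block size $B$, repeat $O(\log n)$ times keeping the shortest edit sequence, and clean the result up with \cite{landau1998incremental}: the bound $\E[t]\le O(\tfrac1p\ed(x,y)+M_{1/\alpha,B}(x))$ of Theorem~\ref{thmmain} becomes, after repetition, an $O(\alpha)$-approximation with high probability, in time $\tilde{O}(nB)\le\tilde{O}(n^{4/3}B^{1/3})$.

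To remove the assumption that $B$ is known, the algorithm iterates over power-of-two candidates $\beta=2^0,2^1,2^2,\dots$, spending $\tilde{O}(n^{4/3}\beta^{1/3})$ time at stage $\beta$: it (i) runs \cite{landau1998incremental} with threshold $n^{2/3}\beta^{1/6}$, terminating with the exact answer if $\ed(x,y)$ is that small; (ii) runs a random-sampling test that decides whether $M_{1/\alpha,\beta}(x)$ is below a threshold of order $n^{2/3}\beta^{1/3}$ (equivalently, whether all but $O(n^{2/3}\beta^{-2/3})$ of the scale-$\beta$ blocks are $(1/\alpha)$-unique); and (iii) if the test passes, runs Theorem~\ref{thmmain} with $p=1/\alpha$ and block size $\beta$, recording the resulting edit sequence. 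It halts the first time the test in (ii) succeeds, and outputs the shortest edit sequence it has produced (always a valid upper bound on $\ed(x,y)$). Correctness follows because, by the time $\beta$ reaches the hypothesized $B$, the test in (ii) succeeds (after fixing the constant in its threshold), reducing to the known-$B$ argument above; the running time is a geometric sum over $\beta\le B$, dominated by the $\beta=B$ term, namely $\tilde{O}(n^{4/3}B^{1/3})$.

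The main obstacle is the sampling test of step~(ii): we must decide, within $\tilde{O}(n^{4/3}\beta^{1/3})$ time, whether $M_{1/\alpha,\beta}(x)$ exceeds a given threshold $\theta$, even though checking the $(1/\alpha)$-uniqueness of a single scale-$\beta$ block by brute force already costs $\tilde{\Omega}(n\beta)$. The plan is to sample $\tilde{O}(n/\theta)$ random blocks -- enough to hit a non-$(1/\alpha)$-unique block with high probability whenever $M_{1/\alpha,\beta}(x)>\theta$ -- and, for each sampled block, probe its spaced-out $\beta$-substrings (the $O(\alpha)$ substrings whose start offsets are multiples of $\beta/(100\alpha)$, exactly as in the reduction of Section~\ref{subsecreduction}) against a random selection of $\beta$-substrings of $x$ using \cite{landau1998incremental} with threshold $\beta/\alpha$. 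The delicate part -- and where the asymmetry of the pseudorandomness model earns its keep -- is to show this coarse probing never falsely declares ``below threshold'': one must argue that any block contributing to $M_{1/\alpha,\beta}(x)$ at a scale relevant to the uncleaned regime of $\ed(x,y)$ has a witness substring close to enough $\beta$-substrings of $x$ that a random probe finds one with constant probability. A secondary, more routine issue is the regime $\beta>\sqrt{n}$, where running Theorem~\ref{thmmain} with block size $\beta$ alone already exceeds the budget; there one runs it instead with block size $\min(\beta,\sqrt{n})$, invoking the faster $\tilde{O}(p\cdot nB)$ variant of Theorem~\ref{thmmain} from the Remark (valid since $p^2=1/\alpha^2\ge 1/\beta$), and appealing to the fact that as $B$ grows the hypothesis on $M_{1/\alpha,B}(x)$ weakens and $\ed(x,y)$ outside the cleaned range becomes correspondingly large, so that a coarser block size still suffices.
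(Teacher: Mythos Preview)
Your overall architecture matches the paper's: iterate over candidate block sizes, use a sampling test on $M_{1/\alpha,\beta}(x)$ to decide when to stop, invoke Theorem~\ref{thmmain} at the selected scale, and run \cite{landau1998incremental} in parallel to handle the small-distance regime and absorb the additive $M$ term. However, your implementation of step~(ii) has a genuine gap, and some of the difficulty you encounter stems from a typo in the theorem statement.

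First, the bound stated as $\tilde{O}(n^{4/3}B^{1/3})$ is a misprint; the paper's proof (and its abstract and introduction) give $\tilde{O}(n^{4/3}B^{2/3})$. This matters because your random-probing uniqueness test is motivated by the belief that an exhaustive check costs $\tilde{\Omega}(n\beta)$ per sampled block, which exceeds a $\tilde{O}(n^{4/3}\beta^{1/3})$ budget when multiplied by $\tilde{O}(n/\theta)$ samples. With the corrected exponent $\beta^{2/3}$, the exhaustive approach fits: sample $\tilde{O}(n^{1-\epsilon})$ blocks and, for each, compare its $O(\alpha)$ spaced $\beta$-substrings against \emph{all} $O(n\alpha/\beta)$ spaced $\beta$-substrings of $x$ via \cite{landau1998incremental} at threshold $\Theta(\beta/\alpha)$; the total is $\tilde{O}(n^{2-\epsilon}(\beta+\alpha^2))$. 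This is exactly Lemma~\ref{lemtestp}.

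The random-probing test you propose instead cannot be made to work. A block $x^i$ is non-$(1/\alpha)$-unique as soon as a \emph{single} $\beta$-substring of $x^i$ is within $\beta/\alpha$ of a \emph{single} $\beta$-substring elsewhere in $x$; there is no reason such a witness must be close to a positive fraction of all $\beta$-substrings of $x$, so a random probe has no controlled probability of detecting it. Your hedge --- that ``any block contributing to $M_{1/\alpha,\beta}(x)$ at a scale relevant to the uncleaned regime of $\ed(x,y)$ has a witness close to enough substrings'' --- is not something one can extract from the hypotheses, and the paper makes no such claim.

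Two smaller points. Your ``secondary issue'' for $\beta>\sqrt{n}$ also evaporates with the corrected exponent, since $n\beta\le n^{4/3}\beta^{2/3}$ for all $\beta\le n$. And your parameters do not balance: if \cite{landau1998incremental} is run only to threshold $n^{2/3}\beta^{1/6}$, then the residual additive error $n^{2/3}\beta^{1/3}$ is not dominated by $\alpha\cdot n^{2/3}\beta^{1/6}$ unless $\beta\le\alpha^6$. The paper instead takes the $M$-threshold to be $\Theta(n^{2/3}\alpha^{2/3}(\beta+\alpha^2)^{1/3})$ and runs \cite{landau1998incremental} to threshold $\Theta(n^{2/3}(\beta+\alpha^2)^{1/3}/\alpha^{1/3})$, and these balance exactly.
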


In the second setting we consider, one is given a source string $x$
which one intends to compare with many other strings $y$. Defining
$\alpha$ and $B$ as before, we present an $\tilde{O}(n^{1.5}(\sqrt{B} +
\alpha))$-time algorithm for preprocessing $x$, so that any edit
distance computation $\ed(x, y)$ can later be $\alpha$-approximated in
time $\tilde{O}(nB)$.

\begin{thm}
Let $x$ be a string in $\Sigma^n$. Let $\alpha > 1$ be an
approximation threshold, and suppose there is a power-of-two
block-size $B$ for which the string $x$ is $(1/\alpha,
B)$-pseudorandom. Then without knowing $B$ beforehand, one can (with high
probability) construct an algorithm $\mathcal{S}$ in time
$\tilde{O}(n^{1.5}(\sqrt{B} + \alpha))$ so that $\mathcal{S}$ is an
$\tilde{O}(nB)$-time algorithm for approximating $\ed(x, y)$ within a
factor of $O(\alpha)$ (with high probability) for arbitrary strings $y
\in \Sigma^n$.

Moreover, rather than $x$ being $(1/ \alpha, B)$-pseudorandom, it
suffices that $M_{1/\alpha, B}(x) < n^{1/2} \cdot B^{1/2}$.
\label{thmmultishot}
\end{thm}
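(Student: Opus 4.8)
The plan is to reduce Theorem~\ref{thmmultishot} to the problem of finding, without knowing $B$, a power-of-two block size $\hat B = \Theta(B)$ together with the guarantee that $M_{1/\alpha, \hat B}(x)$ is small enough that Theorem~\ref{thmmain} (applied with $p = 1/\alpha$ and block size $\hat B$) yields an $O(\alpha)$-approximation to $\ed(x,y)$. The crucial slack is that we are willing to spend $\tilde O(n\hat B)$ per query, so if $\ed(x,y) \le \sqrt{n\hat B}$ we can instead compute $\ed(x,y)$ exactly in time $O(n + \ed(x,y)^2) = \tilde O(n\hat B)$ using the Landau--Myers algorithm~\cite{landau1998incremental}. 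Running both algorithms and taking the smaller output, it therefore suffices that the sequence of edits produced by Theorem~\ref{thmmain} has length $O(\alpha\,\ed(x,y) + M_{1/\alpha,\hat B}(x))$ with the additive term $M_{1/\alpha,\hat B}(x) = O(\sqrt{n\hat B})$: when $\ed(x,y) > \sqrt{n\hat B}$ this additive error is already $O(\ed(x,y))$ and is absorbed, and when $\ed(x,y) \le \sqrt{n\hat B}$ the exact algorithm takes over. So the preprocessing goal is reduced to finding the smallest power-of-two $\hat B$ for which $M_{1/\alpha,\hat B}(x) < \sqrt{n\hat B}$, and this $\hat B$ is at most $B$ since $M_{1/\alpha,B}(x) = 0$ (or $< \sqrt{nB}$ under the relaxed hypothesis).

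Next I would describe the preprocessing algorithm, which iterates over candidate block sizes $B_i = 2^i$ in increasing order and, for each, estimates $M_{1/\alpha, B_i}(x)$ by random sampling to decide whether $M_{1/\alpha, B_i}(x) < \sqrt{nB_i}$ (really one tests a constant-factor-relaxed threshold so that a Chernoff bound distinguishes the two cases with high probability). The estimator works as follows: $M_{1/\alpha,B_i}(x)$ counts the letters in blocks $x^k$ (of length $6B_i$) that are not $1/\alpha$-unique, i.e.\ that contain a $B_i$-letter substring within edit distance $B_i/\alpha$ of some $B_i$-letter substring of $x$ disjoint from $x^k$. To estimate the fraction of such blocks, sample $\tilde O(n/\sqrt{nB_i}) = \tilde O(\sqrt{n/B_i})$ random blocks $x^k$; for each sampled block, decide whether it is $1/\alpha$-unique. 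Checking $1/\alpha$-uniqueness of a single block against all of $x$ is the expensive sub-step: one must, for each of the $O(B_i)$ choices of a $B_i$-letter substring $w$ of $x^k$, determine whether some $B_i$-letter substring of $x$ far from $x^k$ lies within $B_i/\alpha$ edits of $w$. Using the Landau--Myers exact algorithm with threshold $B_i/\alpha$, comparing $w$ against a single aligned position costs $O(B_i + (B_i/\alpha)^2)$, and there are $O(n)$ candidate positions, so a naive uniqueness check costs $\tilde O(B_i \cdot n \cdot (B_i/\alpha^2 + 1)) = \tilde O(n B_i^2/\alpha^2 + nB_i)$; summed over the $\tilde O(\sqrt{n/B_i})$ sampled blocks and over all $O(\log n)$ values of $i$ up to $B$, the dominant contribution comes from $i$ with $B_i = \Theta(B)$, giving $\tilde O(\sqrt{n/B}\cdot(nB^2/\alpha^2 + nB)) = \tilde O(n^{1.5}(B^{1.5}/\alpha^2 + \sqrt B))$. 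This is too large, so the key refinement is that one need not test every $B_i$-letter substring $w$ of $x^k$ and every offset separately: it suffices to test $w$'s whose start is a multiple of $B_i/\alpha$ (there are $O(\alpha)$ of them) against target windows whose start is a multiple of $B_i/\alpha$ (there are $O(n/(B_i/\alpha)) = O(n\alpha/B_i)$ of them), losing only constant factors in the uniqueness definition as in the edit-matching definition of Section~\ref{subsecreduction}. Then one uniqueness check costs $\tilde O(\alpha \cdot (n\alpha/B_i) \cdot (B_i + (B_i/\alpha)^2)) = \tilde O(n\alpha^2 + nB_i)$, and multiplying by $\tilde O(\sqrt{n/B_i})$ sampled blocks gives $\tilde O(\sqrt{n/B_i}\,(n\alpha^2 + nB_i)) = \tilde O(n^{1.5}(\alpha^2/\sqrt{B_i} + \sqrt{B_i}))$, which for $B_i \ge \alpha^2$ (and for $p = 1/\alpha \ge 1/B_i$, so $B_i \ge \alpha$, we may assume $B_i \ge \alpha$ and in the worst case $B_i \in [\alpha, B]$) is $\tilde O(n^{1.5}(\sqrt{B_i} + \alpha))$, matching the claimed bound after summing the geometric series over $i$.

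Then I would assemble the pieces: the preprocessing outputs the first $\hat B = B_i$ at which the sampled estimate certifies $M_{1/\alpha,\hat B}(x) \lesssim \sqrt{n\hat B}$, which by the high-probability correctness of the estimator satisfies $\hat B \le B$ (so $\hat B = O(B)$ and $\tilde O(n\hat B) = \tilde O(nB)$) and $M_{1/\alpha,\hat B}(x) = O(\sqrt{n\hat B})$; it stores $\hat B$. The query algorithm $\mathcal S$, given $y$, runs Theorem~\ref{thmmain} with parameters $p=1/\alpha$ and $B=\hat B$ in time $\tilde O(n\hat B)$ to get a candidate edit sequence of expected length $O(\alpha\,\ed(x,y) + \sqrt{n\hat B})$, runs the Landau--Myers algorithm capped at $\ed(x,y) \le \sqrt{n\hat B}$ in time $\tilde O(n\hat B)$, takes the shorter of the two (and repeats $O(\log n)$ times to convert the expectation bound of Theorem~\ref{thmmain} into a high-probability bound); case analysis on whether $\ed(x,y)$ exceeds $\sqrt{n\hat B}$ shows the result is an $O(\alpha)$-approximation. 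The main obstacle is the sampling/estimation step: one must argue both that $\tilde O(\sqrt{n/B_i})$ samples suffice to distinguish "$M_{1/\alpha,B_i}(x) \ge c_1\sqrt{nB_i}$" from "$M_{1/\alpha,B_i}(x) \le c_2\sqrt{nB_i}$" with high probability via a Chernoff bound (which is routine), and, more delicately, that the relaxed, offset-restricted uniqueness test only distorts the definition of $p$-unique by a constant factor in $p$ so that a negative test genuinely certifies $(1/O(\alpha), B_i)$-near-pseudorandomness in the sense required by Theorem~\ref{thmmain}; handling this relaxation carefully, exactly parallel to the partial-edit-match discretization in Section~\ref{subsecreduction}, is where the real work lies.
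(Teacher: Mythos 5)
Your proposal follows the paper's proof essentially exactly: the query-time fallback to the Landau--Myers algorithm to absorb an additive error of order $\sqrt{nB}$, the iteration over power-of-two block sizes, the $\tilde O(\sqrt{n/B_i})$-block sampling estimate of $M_{1/\alpha,B_i}$, and the offset-discretized uniqueness test (with its attendant constant-factor relaxation of $p$ and per-pair Landau--Myers checks) are precisely Lemma \ref{lemtestp} and its use in the paper's proof of Theorem \ref{thmmultishot}. The one discrepancy is your acceptance threshold $\sqrt{nB_i}$, which leaves a residual $n^{1.5}\alpha^2/\sqrt{B_i}$ term in the preprocessing time when $B_i<\alpha^2$ (your handling of this case is muddled); the paper sidesteps it by testing against the larger threshold $\Theta\bigl(\sqrt{n(B_i+\alpha^2)}\bigr)$, whose correspondingly larger additive error $O(\alpha\sqrt{nB_i})$ is still absorbed at query time by $\ed(x,y)\ge\sqrt{nB_i}$.
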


The proofs of Theorems \ref{thmsingleshot} and \ref{thmmultishot} will
rely on the following lemma:

\begin{lem}
    \label{lemtestp}
  Consider a string $x$ of length $n$, and a parameter $B \in
  \mathbb{N}$ such that $B$ is a power of two. Then for any threshold
  $n^\epsilon$ and value $p = \frac{1}{2^j}$, there is an algorithm
  which runs in time $\tilde{O}(n^{2 - \epsilon} (B + 1/p^2))$, and
  which with high probability returns true if $M_{p / 2, B} \le
  n^{\epsilon} / 2$, and false if $M_{p, B} >
  2n^{\epsilon}$. (Otherwise, the algorithm may return true or false
  arbitrarily.)
\end{lem}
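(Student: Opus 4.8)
The plan is to estimate $M_{p,B}(x)$ (or a closely related quantity) by random sampling, without ever computing it exactly. Recall that a block $x^i$ of length $6B$ is \emph{$p$-unique} if every $B$-letter substring of $x^i$ has edit distance at least $pB$ from every $B$-letter substring of $x$ not intersecting $x^i$, and $M_{p,B}(x)$ counts the letters in non-$p$-unique blocks. The first step is to reformulate $p$-uniqueness in a sampleable way: a block $x^i$ fails to be $p$-unique precisely when there is a ``witness'' pair $(a,b)$ of $B$-letter substrings, with $a \subseteq x^i$ and $b$ disjoint from $x^i$, such that $\ed(a,b) < pB$. Since we only need to distinguish the case $M_{p/2,B} \le n^\epsilon/2$ from $M_{p,B} > 2n^\epsilon$, we have a factor-$2$ slack in $p$ to play with, which is what lets sampling work: we will test each candidate block against a random sample of comparison substrings and against a $2$-approximation (or exact, via Landau--Myers) threshold check.

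The key steps, in order, are as follows. (i) Enumerate the $O(n/B)$ blocks $x^i$. For each, we want to decide approximately whether it is ``$p$-bad.'' (ii) For a fixed block $x^i$, checking $p$-uniqueness amounts to asking whether some $B$-letter window $a$ inside $x^i$ is within $pB$ edits of some $B$-letter window $b$ elsewhere; by aligning window start-positions to a grid of spacing $\Theta(pB)$ (exactly as in the partial-edit-match definition of Section~\ref{subsecreduction}), it suffices to check $O(1/p^2)$ candidate pairs per block, each comparison costing $O(B + (pB)^2) = O(B/p^{2}\cdot p^{2})$... more simply $O(B + p^2B^2)$ time via the $O(|u|+|v|+\ed^2)$ algorithm of \cite{landau1998incremental}. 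Doing this exactly for \emph{all} pairs would cost $\tilde O((n/B)(n/B)(B + p^2B^2))$, too slow. (iii) Instead, for each block $x^i$ draw a random sample of $\tilde O(n^{1-\epsilon})$ of the $O(n/(pB))$ grid-aligned windows $b$ from the rest of $x$, and compare $x^i$'s $O(1/p)$ grid-aligned internal windows $a$ against all sampled $b$. The cost is $\tilde O\big((n/B)\cdot n^{1-\epsilon}\cdot (1/p) \cdot (B + p^2 B^2)\big)$; after simplification using $p \ge 1/B$ this is $\tilde O(n^{2-\epsilon}(B + 1/p^2))$, matching the target. (iv) A Chernoff bound over the sample shows: if $x^i$ is $p$-non-unique via many witnesses it is very likely flagged, and if it is $(p/2)$-unique it is never flagged; blocks in between may go either way. (v) Sum the lengths $6B$ of flagged blocks to get an estimate $\widehat M$; return true iff $\widehat M \le n^\epsilon$. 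Finally, a second Chernoff/union bound over the $O(n/B)$ blocks upgrades per-block correctness to the global guarantee: with high probability $\widehat M$ lies between (essentially) $M_{p/2,B}$ and a constant times $M_{p,B}$, so the threshold test behaves as claimed.

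The subtle points — and the main obstacle — lie in step (iv): making the sampling argument sound requires that a \emph{single} witness suffices to make a block non-$p$-unique, so flagging has to be based on finding \emph{any} violating pair, yet a block with only one witness partner among $n/(pB)$ candidates will be missed by a size-$n^{1-\epsilon}$ sample with high probability. The resolution is that the lemma's guarantees are one-sided and slack: we are only required to return \texttt{false} when $M_{p,B} > 2n^\epsilon$, i.e.\ when \emph{many} blocks are $p$-bad, not every individual bad block need be detected; and symmetrically we must return \texttt{true} when $M_{p/2,B}\le n^\epsilon/2$, for which it suffices never to flag a $(p/2)$-unique block — and a $(p/2)$-unique block has \emph{no} witness at threshold $pB/2$, so with an exact threshold check at $pB/2$ it is flagged with probability zero. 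Thus the argument should use threshold $pB/2$ for the per-pair comparison (not $pB$): a $(p/2)$-unique block is then never flagged, and a $p$-bad block either has $\gtrsim 1/(n^{1-\epsilon})$-fraction of the candidate partners as witnesses (so the sample catches it) or, if it has very few witnesses, we charge its $6B$ letters to a small additive error that is absorbed because such blocks are few enough that their total contribution cannot cross the gap between $n^\epsilon/2$ and $2n^\epsilon$. Pinning down this bookkeeping — precisely, showing the undetected-but-bad blocks contribute $o(n^\epsilon)$ except with negligible probability — is the delicate part; everything else (the grid-alignment reduction to $O(1/p^2)$ pairs, the runtime arithmetic, the use of Landau--Myers for the threshold check) is routine.
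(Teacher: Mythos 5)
There is a genuine gap, and it is exactly at the point you flag as ``the delicate part'': your sampling is in the wrong direction, and the bookkeeping you defer cannot be made to work. You enumerate \emph{all} $O(n/B)$ blocks and, for each, compare it against a random sample of $\tilde O(n^{1-\epsilon})$ of the $\Theta(n/(pB))$ grid-aligned external windows. But non-$p$-uniqueness is an existential property: a single witness pair $(a,b)$ with $\ed(a,b)<pB$ suffices to make a block bad, and nothing bounds the number of bad blocks that have only $O(1)$ witnesses among the $\Theta(n/(pB))$ candidates. Consider a string in which $\Theta(n^{\epsilon}/B)$ blocks are each non-$p$-unique via exactly one witness partner apiece; then $M_{p,B}>2n^{\epsilon}$ and the algorithm must return false, yet each bad block is flagged by your sample with probability only $O(n^{1-\epsilon}\cdot pB/n)=O(pB/n^{\epsilon})$, which is $o(1)$ for, say, constant $p$ and $B=O(\log n)$. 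With high probability no bad block is flagged and you return true. Your proposed patch --- that undetected bad blocks ``are few enough that their total contribution cannot cross the gap'' --- is asserted without justification and is false in this example; the gap between $n^{\epsilon}/2$ and $2n^{\epsilon}$ is only a constant factor and offers no room to absorb an unbounded number of single-witness bad blocks.

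The paper's proof inverts the sampling: it draws $\tilde O(n^{1-\epsilon})$ random \emph{blocks} and runs on each a \emph{deterministic, exhaustive} uniqueness test against all grid-aligned windows in $x$ (cost $O(nB+n/p^{2})$ per block, giving the stated total). That test is two-sided correct per block --- it returns true whenever the block is $(p/2)$-unique and false whenever it is not $p$-unique --- so the only randomness is which blocks are inspected, and the fraction of inspected blocks reported bad is a faithful estimator of $M_{\cdot,B}/n$; a Chernoff bound on the count of ``false'' outcomes then separates the two hypotheses. Your grid-alignment reduction to $O(1/p^{2})$ pairs per block and the use of the Landau--Myers low-distance algorithm for each threshold check are the same ingredients the paper uses, and your runtime arithmetic is fine, but the per-block test must be exhaustive over the comparison windows (with the sampling budget spent on which blocks to test) for the correctness argument to go through.
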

\begin{proof}
  We will assume without loss of generality that $n$ is a multiple of
  $6B$, since otherwise $x$ can be padded with null characters to be a
  length divisible $6B$, without changing $M_{p, B}$. Let $x^1 \cdots
  x^{n / 6B}$ be the $6B$-letter blocks that make up $x$.
  
  Let $c$ be a sufficiently large constant. The algorithm uses a
  simple random sampling approach. We randomly sample $s = c n^{1 -
    \epsilon} \log n$ blocks $x^{i_1}, \ldots, x^{i_s}$ from
  $x$. For each of these blocks $x^{i_l}$, we will perform a test on
  $x^{i_l}$ which returns true if $x^{i_l}$ is $(p/2)$-unique, returns
  false if $x^{i_l}$ is not $p$-unique, and may return either true or
  false otherwise. (We will describe how this test, which we call the
  \emph{uniqueness test}, works at the end of the proof.) Let $r$ be
  the number of $x^{i_l}$'s for which the uniqueness test returns
  false. Because the uniqueness test returns true whenever $x^{i_l}$
  is $(p / 2)$-unique,
  $$\E[r] \le s \cdot \frac{M_{p / 2, B}}{n} = \frac{M_{p / 2, B}
    \cdot c \log n}{n^{\epsilon}}.$$ Thus if $M_{p / 2, B} \le
  n^\epsilon / 2$, then $\E[r] \le \frac{c}{2} \log n$, which by a
  Chernoff bound means that (for $c$ large enough) with high
  probability $r < c \log n$. On the other hand, because the
  uniqueness test returns false whenever $x^{i_l}$ is not $p$-unique,
  $$\E[r] \ge s \cdot \frac{M_{p, B}}{n} = \frac{M_{p, B} \cdot c \log
    n}{n^{\epsilon}}.$$ Thus if $M_{p, B} \ge 2n^{\epsilon}$, then
  $\E[r] \ge 2c \log n$, which by a Chernoff bound means that (for $c$
  large enough) $r > c \log n$ with high probability. It follows that
  by returning whether $r \le c \log n$, our algorithm will with high
  probability return true if $M_{p / 2, B} \le n^{\epsilon} / 2$, and
  return false if $M_{p, B} > 2n^{\epsilon}$.

  In the remainder of the proof, we describe the uniqueness test for a
  given block $x^{i_l}$, and evaluate the runtime of the
  algorithm. The uniqueness test for $x^{i_l}$ examines every
  $B$-letter substring $a$ in $x$ such that $a$ does not intersect
  $x^{i_l}$ and such that $a$ ends in a position divisible by
  $\frac{pB}{8}$; and also examines each $B$-letter substring $b$ in
  $x$ that is contained entirely in $x^{i_l}$ and ends in a position
  divisible by $\frac{pB}{8}$. For each such pair $a$ and $b$, the
  algorithm uses the low-distance regime algorithm of
  \cite{landau1998incremental} to determine in time $O(B + p^2B^2)$
  whether $\ed(a, b) \le pB / 2$. The uniqueness test returns true if
  and only if $\ed(a, b) \le p B/ 2$ for all such $a$ and $b$. Notice
  that if $x^l$ is $(p/2)$-unique, then the uniqueness test will
  necessarily return true. Moreover, if the uniqueness test returns
  true, then for every $B$-letter substring $a'$ in $x$ not
  intersecting $x^l$, and every $B$-letter substring $b'$ in $x^l$, it
  must be that $\ed(a, b) \le pB / 2 + 4 \frac{pB}{8} \le pB$, meaning
  that $x^l$ is $p$-unique. Hence the uniqueness test satisfies both
  properties required from it.

  Finally, we evaluate the runtime of our algorithm. Since there are
  $O(n / (pB))$ options for $a$ and $O(1 / p)$ options for $b$ in each
  uniqueness test, each uniqueness test runs in time
  $$O\left(\frac{n}{pB} \cdot \frac{1}{p} \cdot (B + p^2B^2)\right) \le O(nB + n /
  p^2).$$ Since our algorithm runs $\tilde{O}(n^{1 - \epsilon})$
  uniqueness tests, the full runtime is
  $\tilde{O}(n^{2 - \epsilon} (B + 1/p^2))$.
\end{proof}

Using Lemma \ref{lemtestp}, we now prove Theorem \ref{thmsingleshot}.

\begin{proof}[Proof of Theorem \ref{thmsingleshot}]
  Assume without loss of generality that $\alpha$ is a power of two.
  To accomplish the desired time bound, we run two algorithms in
  parallel. The first algorithm is the low-distance regime algorithm
  of \cite{landau1998incremental} which runs in time $O(n + \ed(x,
  y)^2)$. The second algorithm considers values $B_i = \alpha \cdot
  2^i$ for $i = 0, 1, 2, \ldots$, and applies Lemma \ref{lemtestp} to
  each $B_i$ with $p = \frac{2}{\alpha}$ and $n^{\epsilon} = 2n^{2/3}
  \cdot \alpha^{2/3} \cdot (B_j + \alpha^2)^{1/3}$; upon finding a
  $B_j$ for which Lemma \ref{lemtestp} returns true, the algorithm
  approximates $\ed(x, y)$ in time $\tilde{O}(n \cdot B_j)$ by
  treating $x$ as $(2/\alpha, B_j)$-pseudorandom.

  If the low-distance regime algorithm finishes first, then we can
  return $\ed(x, y)$ exactly. If the second algorithm terminates
  first, then we continue to run the low-distance regime algorithm
  until it has run for time $\Theta(n^{4/3} \cdot \alpha^2 \cdot (B_j
  + \alpha^2)^{2/3})$, and only if the low-distance regime algorithm
  does not terminate in that time window do we use the output of the
  second algorithm.

  Because $p = \frac{2}{\alpha}$ and $n^{\epsilon} = 2n^{2/3} \cdot
  \alpha^{2/3} \cdot (B_j + \alpha^2)^{1/3}$, the second algorithm's
  runtime will be
  \begin{equation}
    \begin{split}
      & \tilde{O}(n^{2 - \epsilon} (B_j + 1/p^2)) \\
      & = \tilde{O}(n^{4/3} (B_j + \alpha^2)^{2/3} / \alpha^{2/3})  \\
      &\le
      \tilde{O}(n^{4/3} (B_j^{2/3} + \alpha^{2/3})) \\
            &\le
    \tilde{O}(n^{4/3} B_j^{2/3}).
    \end{split}
    \label{eqbackhurts}
  \end{equation}
  Notice that Lemma \ref{lemtestp} guarantees with high probability
  that $B_j \le B$ (because $M_{1/\alpha, B} < n^{2/3} \cdot
  B^{1/3}$), meaning that with high probability our algorithm achieves
  runtime $\tilde{O}(n^{4/3}B^{2/3})$, as desired.

  To prove the accuracy of our algorithm, consider the case where the
  output is determined by the second algorithm. By Lemma
  \ref{lemtestp}, with high probability, $M_{2/\alpha, B_j} \le
  2n^\epsilon = 4n^{2/3} \cdot \alpha^{2/3} \cdot (B_j +
  \alpha^2)^{1/3}$. By (the high-probability version of) Theorem
  \ref{thmmain}, it follows that the second algorithm will find
  a sequence of $t$ edits for some $t$ satisfying
  \begin{equation}
    t \le O(\alpha \ed(x, y) + n^{2/3} \cdot \alpha^{2/3} \cdot (B_j +
    \alpha^2)^{1/3}),
    \label{eqbackhurts2}
  \end{equation}
  with high probability. Moreover, because the low-distance regime
  algorithm did not finish in the time $O(n^{4/3} (B_j +
  \alpha^2)^{2/3} / \alpha^{2/3})$, it must be that $\ed(x, y) \ge
  n^{2/3} \cdot (B_j + \alpha^2)^{1/3} / \alpha^{1/3}$. Plugging this into
  Equation \eqref{eqbackhurts2}, we get that $t \le O(\alpha \ed(x,
  y))$, as desired.
\end{proof}

Finally, we prove Theorem \ref{thmmultishot}.

\begin{proof}[Proof of Theorem \ref{thmmultishot}]
  We assume without loss of generality that $\alpha$ is a power of
  two. Our preprocessing step goes as follows. We consider values $B_i
  = \alpha 2^i$ for $i = 0, 1, 2, \ldots$, and apply Lemma
  \ref{lemtestp} to each $B_i$ with $p = \frac{2}{\alpha}$ and
  $n^{\epsilon} = 2n^{1/2} \cdot (B_j + \alpha^2)^{1/2}$; upon finding
  a $B_j$ for which Lemma \ref{lemtestp} returns true, we have
  completed the preprocessing step. Note that the preprocessing step
  takes time
  \begin{equation*}
    \begin{split}
      & \tilde{O}(n^{2 - \epsilon}(B_j + \alpha^2)) \\
      & = \tilde{O}(n^{1.5}\sqrt{B_j + \alpha^2}), \\
    \end{split}
  \end{equation*}
 and since with high probability $B_j$ will
  be at most $B$ (by Lemma \ref{lemtestp} and because of the fact that
  $M_{1/\alpha, B} < n^{1/2} \cdot B^{1/2}$), it follows that with
  high probability the runtime of the preprocessing step is at most
  $\tilde{O}(n^{1.5}\sqrt{B + \alpha^2})$, as desired.

  Given a string $y$, we can then approximate $\ed(x, y)$ as
  follows. We first run the low-distance regime algorithm of
  \cite{landau1998incremental} for time $\Theta(n \cdot B_j)$, and if
  it completes then we return its output. Otherwise, we may conclude
  that
  \begin{equation}
    \ed(x, y) \ge \sqrt{n B_j}.
    \label{eqbackhurtsless}
  \end{equation}
    In the latter case, we treat $x$ as $(2/\alpha, B_j)$-pseudorandom
    and generate an estimate for $\ed(x, y)$ in time $\tilde{O}(n
    \cdot B_j)$ (using the high-probability version of Theorem
    \ref{thmmain}). By Lemma \ref{lemtestp}, we know that with high
    probability $M_{2/\alpha, B_j} \le 2n^\epsilon = 4 n^{1/2} \cdot
    (B_j + \alpha^2)^{1/2}$. By
    Theorem \ref{thmmain}, it follows that with high probability the
    number of edits $t$ which we find between the strings $x$ and $y$
    will satisfy
    \begin{align*}
      t & \le O\left(\alpha \ed(x, y) + n^{1/2} \cdot (B_j + \alpha^2)^{1/2}\right) \\
      & \le O\left(\alpha \ed(x, y) + \alpha \cdot n^{1/2} \cdot B_j^{1/2}\right).
    \end{align*}
    By \eqref{eqbackhurtsless}, we get $t \le
    O(\alpha \ed(x, y))$, as desired.
\end{proof}

\section{Conclusion}

In this paper, we have proposed a natural model of pseudorandomness
for strings which allows for edit distance to be efficiently
approximated. We conclude by presenting two directions for future
work.
\begin{itemize}
  \item One important direction for future work is to investigate the
    degree to which the algorithm presented is practical for
    real-world inputs. In particular, for what parameters $p$ and $B$
    do real-world inputs tend to be (almost) $(p, B)$-pseudorandom?

  \item On the theoretical side, it is interesting to note that in the
smoothed complexity model the algorithm of \cite{smoothcomplexity}
achieves sublinear runtime for strings $x$ and $y$ of sufficiently
large edit distance apart. Can such a result be replicated in our
pseudorandomness model?
\end{itemize}

\paragraph*{Acknowledgments} The author would like to thank Moses Charikar for his mentoring and advice throughout the project. The author would also like to thank Ofir Geri for many useful conversations, and Michael A. Bender for his advice during the writing process.

This research was supported in part by NSF Grants 1314547 and 1533644.

\bibliographystyle{plain} \bibliography{ref}

\begin{thebibliography}{10}

\bibitem{blast}
Stephen~F. Altschul, Thomas~L. Madden, Alejandro~A. Sch{\"a}ffer, Jinghui
  Zhang, Zheng Zhang, Webb Miller, and David~J. Lipman.
\newblock Gapped blast and psi-blast: a new generation of protein database
  search programs.
\newblock {\em Nucleic acids research}, 25(17):3389--3402, 1997.

\bibitem{smoothcomplexity}
Alexandr Andoni and Robert Krauthgamer.
\newblock The smoothed complexity of edit distance.
\newblock {\em ACM Transactions on Algorithms (TALG)}, 8(4):44, 2012.

\bibitem{ApproxPolyLog}
Alexandr Andoni, Robert Krauthgamer, and Krzysztof Onak.
\newblock Polylogarithmic approximation for edit distance and the asymmetric
  query complexity.
\newblock In {\em Proceedings of the 51st Annual Symposium on Foundations of
  Computer Science (FOCS)}, pages 377--386, 2010.

\bibitem{ApproxSubPolyDistortion}
Alexandr Andoni and Krzysztof Onak.
\newblock Approximating edit distance in near-linear time.
\newblock {\em {SIAM} J. Comput.}, 41(6):1635--1648, 2012.

\bibitem{backurs2015edit}
Arturs Backurs and Piotr Indyk.
\newblock Edit distance cannot be computed in strongly subquadratic time
  (unless seth is false).
\newblock In {\em Proceedings of the 47th Annual Symposium on Theory of
  Computing (STOC)}, pages 51--58, 2015.

\bibitem{hard2}
Karl Bringmann and Marvin K{\"u}nnemann.
\newblock Quadratic conditional lower bounds for string problems and dynamic
  time warping.
\newblock In {\em Foundations of Computer Science (FOCS), 2015 IEEE 56th Annual
  Symposium on}, pages 79--97, 2015.

\bibitem{approx18}
Diptarka Chakraborty, Debarati Das, Elazar Goldenberg, Michal Kouck{\`y}, and
  Michael Saks.
\newblock Approximating edit distance within constant factor in truly
  sub-quadratic time.
\newblock In {\em Proceedings of the 59th Annual Symposium on Foundations of
  Computer Science (FOCS)}, pages 979--990, 2018.

\bibitem{chakraborty2016streaming2}
Diptarka Chakraborty, Elazar Goldenberg, and Michal Kouck{\`y}.
\newblock Streaming algorithms for computing edit distance without exploiting
  suffix trees.
\newblock {\em arXiv preprint arXiv:1607.03718}, 2016.

\bibitem{chang1992theoretical}
William~I Chang and Jordan Lampe.
\newblock Theoretical and empirical comparisons of approximate string matching
  algorithms.
\newblock In {\em Annual Symposium on Combinatorial Pattern Matching}, pages
  175--184, 1992.

\bibitem{heuristicoriginal}
Kun-Mao Chao, William~R. Pearson, and Webb Miller.
\newblock Aligning two sequences within a specified diagonal band.
\newblock {\em Bioinformatics}, 8(5):481--487, 1992.

\bibitem{icalp}
Moses Charikar, Ofir Geri, Michael~P. Kim, and William Kuszmaul.
\newblock On estimating edit distance: Alignment, dimension reduction, and
  embeddings.
\newblock In {\em 45th International Colloquium on Automata, Languages, and
  Programming (ICALP)}, pages 34:1--34:14, 2018.

\bibitem{gusfield1997algorithms}
Dan Gusfield.
\newblock {\em Algorithms on strings, trees and sequences: computer science and
  computational biology}.
\newblock Cambridge University Press, 1997.

\bibitem{haeupler18}
Bernhard Haeupler, Aviad Rubinstein, and Amirbehshad Shahrasbi.
\newblock Near-linear time insertion-deletion codes and
  (1+$\varepsilon$)-approximating edit distance via indexing.
\newblock {\em arXiv preprint arXiv:1810.11863}, 2018.

\bibitem{indyk2001algorithmic}
Piotr Indyk.
\newblock Algorithmic aspects of geometric embeddings (tutorial).
\newblock In {\em Proceedings of the 42nd Annual Symposium on Foundations of
  Computer Science (FOCS)}, pages 10--33, 2001.

\bibitem{landau1998incremental}
Gad~M. Landau, Eugene~W. Myers, and Jeanette~P. Schmidt.
\newblock Incremental string comparison.
\newblock {\em SIAM Journal on Computing}, 27(2):557--582, 1998.

\bibitem{landau1989fast}
Gad~M Landau and Uzi Vishkin.
\newblock Fast parallel and serial approximate string matching.
\newblock {\em Journal of algorithms}, 10(2):157--169, 1989.

\bibitem{ma2002patternhunter}
Bin Ma, John Tromp, and Ming Li.
\newblock Patternhunter: faster and more sensitive homology search.
\newblock {\em Bioinformatics}, 18(3):440--445, 2002.

\bibitem{myers1986incremental}
Eugene~Wimberly Myers.
\newblock {\em Incremental alignment algorithms and their applications}.
\newblock University of Arizona, Department of Computer Science, 1986.

\bibitem{navarro2001guided}
Gonzalo Navarro.
\newblock A guided tour to approximate string matching.
\newblock {\em ACM computing surveys (CSUR)}, 33(1):31--88, 2001.

\bibitem{needleman1970general}
Saul~B. Needleman and Christian~D. Wunsch.
\newblock A general method applicable to the search for similarities in the
  amino acid sequence of two proteins.
\newblock {\em Journal of molecular biology}, 48(3):443--453, 1970.

\bibitem{Embedding}
Rafail Ostrovsky and Yuval Rabani.
\newblock Low distortion embeddings for edit distance.
\newblock {\em J. {ACM}}, 54(5):23, 2007.

\bibitem{vintzyuk1968}
Taras~K. Vintsyuk.
\newblock Speech discrimination by dynamic programming.
\newblock {\em Cybernetics}, 4(1):52--57, 1968.

\bibitem{WagnerF74}
Robert~A. Wagner and Michael~J. Fischer.
\newblock The string-to-string correction problem.
\newblock {\em J. {ACM}}, 21(1):168--173, 1974.

\end{thebibliography}

\end{document}